\newenvironment{proof}[1][Proof]{\textbf{#1.} }{\ \rule{0.5em}{0.5em}}
\newtheorem{thm}{Theorem}[section]
\newtheorem{cor}[thm]{Corollary}
\newtheorem{lem}[thm]{Lemma}
\newtheorem{prop}[thm]{Proposition}
\newtheorem{Assumption}[thm]{Assumption}
\newtheorem{rem}{Remark}[section]}
\newcommand{\be}{\begin{equation}}
\newcommand{\mb}{\mathbf}
\newcommand{\p}{\partial}
\newcommand{\ee}{\end{equation}}
\newcommand{\bq}{\begin{eqnarray}}
\newcommand{\eq}{\end{eqnarray}}
\newcommand{\ex}{\mathbb{E}}
\newcommand{\half}{\frac{1}{2}}
\newcommand{\bs}{\bigskip}
\newcommand{\nind}{\noindent}
\newcommand{\nn}{\nonumber}
\newcommand{\pdd}[2]{\frac{\partial #1}{\partial #2}}
\newcommand{\pr}{\mathbb{P}}
\newcommand{\lb}{\lbrace}
\newcommand{\rb}{\rbrace}
\newcommand{\e}{\varepsilon}
\newcommand{\sk}{\smallskip}
\newcommand{\al}{\alpha}
\newcommand{\gm}{\gamma}
\newcommand{\lee}{\,\,\,\,\, \le \,\,\,\,\,}
\newcommand{\lt}{\,\,\,\,\, < \,\,\,\,\,}
\newcommand{\gt}{\,\,\,\,\, > \,\,\,\,\,}
\newcommand{\ul}{\underline}
\newcommand{\mc}{\mathcal}
\newcommand{~}{\,\,\,\,=\,\,\,\,}
\newcommand{\dl}{\delta}
\newcommand{\R}{\mathbb{R}}
\newcommand\xb{\bar{x}}
\newcommand\yb{\bar{y}}
\def \R {\mathbb{R}}
\def \p {\partial}
\newcommand\Eb{\mathbb{E}}
\newcommand\Rb{\mathbb{R}}
\newcommand\Ac{\mathscr{A}}
\newcommand\sig{\sigma}
\renewcommand\d{\partial}
\newcommand\lm{\lambda}
\newcommand{\<}{\langle}
\renewcommand{\>}{\rangle}
\newcommand{\bl}[1]{\textcolor{black}{#1}}
\newcommand{\ja}[1]{\textcolor{black}{#1}}
\newcommand{\mf}[1]{\textcolor{black}{#1}}
\begin{document}

\title{\textbf{Small-time asymptotics under local-stochastic \\
volatility with a jump-to-default: curvature \\and the heat kernel expansion}}
\bs

\author{ John Armstrong\thanks{Dept. Mathematics, King's College London, Strand, London, WC2R 2LS ({\tt John.1.Armstrong@kcl.ac.uk})}
 \and
Martin Forde\thanks{Dept. Mathematics, King's College London, Strand, London WC2R 2LS  ({\tt Martin.Forde@kcl.ac.uk})} \\
\and
Matthew Lorig\thanks{Dept. of Applied Mathematics,
University of Washington, Seattle, WA, USA
({\tt mlorig@uw.edu})
 }
\and
Hongzhong Zhang\thanks{Dept. IEOR,
Columbia University,
New York, NY 10027 ({\tt hz2244@columbia.edu})}
}

\date{\today }

\maketitle

\begin{abstract}
\nind We compute a sharp small-time estimate for implied volatility under a general uncorrelated local-stochastic volatility model.  For this we use the Bellaiche \cite{Bel81} heat kernel expansion combined with Laplace's method to integrate over the volatility variable on a compact set, and (after a gauge transformation) we use the Davies  \cite{Dav88} upper bound for the heat kernel on a manifold with bounded Ricci curvature to deal with the tail integrals.  If the correlation $\rho < 0$, our approach still works if the drift of the volatility takes a specific functional form and there is no local volatility component, and our results include the SABR model for $\beta=1, \rho \le 0$.  \bl{For uncorrelated stochastic volatility models, our results also include a SABR-type model with $\beta=1$ and an affine mean-reverting drift, and the exponential Ornstein-Uhlenbeck model.} We later augment the model with a single jump-to-default with intensity $\lm$, which produces qualitatively different behaviour for the short-maturity smile;  in particular, for $\rho=0$, log-moneyness $x > 0$,  the implied volatility increases by $\lm f(x) t +o(t) $ for some function $f(x)$ which blows up as $x \searrow 0$.  Finally, we compare our result with the general asymptotic expansion in Lorig, Pagliarani  \&  Pascucci  \cite{LPP15}, and we verify our results numerically for the SABR model using Monte Carlo simulation and the exact closed-form solution given in Antonov  \&  Spector  \cite{AS12} for the case $\rho=0$.
\footnote{The authors would like to thank Elton Hsu, Alan Lewis, Robert Neel, Cheng Ouyang and Louis Paulot for many valuable insights, and a particular thanks to the late Peter Laurence for drawing MF's attention to the Bellaiche heat kernel estimate.  The authors are also grateful to three referees and one associate editor, whose comments and suggestions have improved this manuscript.}
\end{abstract}


\section{Introduction}

In the physics literature, a very convenient form for the heat kernel was originally given by de Witt \cite{dW65} (see also  McAvity  \&  Osborn  \cite{MO91}).  We can re-write the second order elliptic operator associated with a general diffusion process on $\mathbb{R}^n$ in terms of the Laplace-Beltrami operator plus a first order differential operator (i.e. a vector field); the heat kernel expansion is obtained as the exponential of the work done by the vector field along the geodesic joining the \bl{start and the end} points, multiplied by the Minakshisundaram-Pleijel \cite{MP49} heat kernel expansion for the usual Laplace-Beltrami operator, which contains the leading order exponential term from large deviations theory multiplied by the square root of the Riemannian volume form element under geodesic spherical coordinates (see also Chavel \cite{Chav84}, Hsu \cite{Hsu02}, Laurence \cite{Laur10}, Neel \cite{Neel07}, Molchanov \cite{Mol75}).   \cite{Mol75} provides a rigorous probabilistic proof of the de Witt expansion at leading order, using a Girsanov change of measure and conditioning on the end point of the process i.e. considering the bridge process.  Bellaiche \cite{Bel81} showed that the Molchanov expansion also holds for non-compact manifolds under certain technical conditions, and it is
the latter which we use in this article.

\sk \sk

Paulot  \cite{Pau10} formally derived a small-time expansion for call options under a general local-stochastic volatility model by applying Laplace's method to integrate the heat kernel over the range of the volatility variable, then using the Tanaka-Meyer formula and some well known asymptotic expansions for the standard \ja{n}ormal distribution function.   \cite{Pau10} also computes explicit formulae for the well known SABR model.  Henry-Labord\'{e}re  \cite{HL08} formally derived a small-time expansion for the implied volatility from a small-time expansion for the \textit{effective local volatility}.  However, both authors do not justify integrating over the infinite range of the volatility variable with appropriate tail estimates (which is needed because the aforementioned small-time heat kernel expansions are only known to converge uniformly on compact sets).  This is the main technical issue which we resolve in this article, under suitable conditions on the vol-of-vol coefficient $\al(y)$ and drift coefficient $\mu(y)$ as $y \to 0$ and as $y \to \infty$.


\sk

 \cite{FJ11} characterize the small-time behaviour of the implied volatility (at leading order) for a local-stochastic volatility model with zero correlation, using the Freidlin-Wentzell
theory of large deviations for SDEs and then converting to the differential geometry problem of computing the
shortest geodesic from a point to a vertical line on a Riemannian manifold.  The volatility is assumed to be bounded, which means that the curvature can change sign (unlike the SABR model), and the solution to this variable endpoint problem is obtained using conserved quantities which arise from integrating the geodesic equations and a transversality condition, where the shortest geodesic comes in perpendicular to the vertical line under the aforementioned metric.  The small-time behaviour of the price of an out-of-the-money European call option is computed using H\"{o}lder's inequality, and this is then translated into a statement about the small-time behaviour of the implied volatility.   \cite{FJ11} also derive a series expansion for the implied volatility in the small-maturity
limit, in powers of the log-moneyness, and they show how to calibrate such a model to the observed implied volatility
smile in the small-maturity limit.

\sk \sk
 Gatheral et al. \cite{GHLOW12} consider small-time asymptotics for a one-dimensional local volatility model, using Girsanov's theorem and conditioning with a bridge process to derive a small-time expansion for the transition density which holds uniformly in $\mathbb{R}$.  They also derive the corresponding expansion for the implied volatility.  When the diffusion coefficient is time-dependent, they find that even the leading order term in the expansion requires a small but important modification.  For a time-homogenous one-dimension diffusion process $dS_t=S_t \sigma(S_t) dW_t$, they prove the following asymptotic expansion for the implied volatility $\hat{\sigma}(K,t)$ at strike $K$ and time-to-maturity $t$
\bq
\hat{\sigma}(K,t) &=&\hat{\sigma}_0(K)\, +\,\frac{\hat{\sigma}_0(K)^3}{\left(\log \frac{K}{S_0}\right)^2}\log\frac{\sqrt{\sigma(S_0)\sigma(K)}}{\hat{\sigma}_0(K)}t\,+\,O(t^2)\, \nn
\eq
\nind as $t \to 0$, where $\hat{\sigma}_0(K)=\Big(\frac{1}{\log \frac{K}{S_0}}\int_{S_0}^{K}\frac{du}{u\sigma(u)}\Big)^{-1}$ is the well known leading order term (see also Busca et al.  \cite{BBF},  \cite{BBF2}).

\sk \sk

Deuschel et al. \cite{DFJV11}, \cite{DFJV11b} use Laplace's method on Wiener space in the same spirit as Azencott, Bismut and Ben Arous \cite{BA88} to compute a small-noise expansion for the density of the canonical projection into $\mathbb{R}^m$ of an $n$-dimensional hypoelliptic diffusion process $X^{\e}_t$.  This is accomplished using the Ben Arous expansion applied to the characteristic function of $X^{\e}_t$ combined with a Fourier inversion.  This has the advantage over the heat kernel expansion that the diffusion coefficient need not be uniformly elliptic.  However, they do not compute the pre-factor that goes in front of the expansion, which is needed to compute the correction term for implied volatility in the small-time limit, which is computed in this article.  More recently, Friz \& deMarco \cite{FdM13} consider a stochastic volatility model governed by a hypoelliptic diffusion satisfying a strong Hormander condition, and in this setting they compute a Varadhan-type formula for the small-time behaviour of the stock price density and characterize the small-time behaviour of the law of the terminal stock price conditioned on the terminal volatility, from which we can then compute the effective local volatility $\mathbb{E}(\sigma_t^2|S_t=K)$ which is a fundamental quantity in the mimicking theorems of Gy\"{o}ngy \cite{Gyo86} and the more recent work of Brunick \& Shreve \cite{BS13}.

\sk \sk

Lorig et al. \cite{LPP15} derive a full asymptotic expansion for the price of a call option and the associated implied volatility under a general class of local-stochastic
volatility models, and provide a rigorous error bound under a uniform ellipticity condition on the diffusion coefficient for the model.  Their error bound is obtained using Duhamel's principle and classical estimates by Friedman on partial derivatives of the fundamental solution to the inhomogeneous heat equation $[\p_t + a(x) \p^2_x] u = 0$ in terms of the fundamental solution to the standard (homogenous) heat equation $[\p_t + \half \sigma^2 \p^2_x] u = 0$ for $\sigma$ constant.  The uniform ellipticity condition is relaxed in Pagliarani \& Pascucci \cite{PP14}, who consider a general class of degenerate (i.e. non-uniformly parabolic) PDEs, which includes the CEV, Heston and SABR models, and
hybrid credit-equity models such as the JDCEV model, and they again derive a rigorous error bound for small-times.

\sk
\sk

\subsection{Outline of article}

\bs
 In Theorem \ref{thm:Bellaiche}, we recall the Bellaiche \cite{Bel81} small-time heat kernel expansion.  This is the main result that is used for proving the main result (Theorem \ref{thm:SmallTCallsOTM}) where we compute a small-time expansion for non at-the-money call options under a general local-stochastic volatility model with zero correlation. The pre-factor in the heat kernel expansion is expressed in terms of the Jacobian of the exponential map, which is a ratio of the pullback of two volume forms at the start and the end point of the geodesic.  We show how this calculation is simplified by working in geodesic normal coordinates, and explain the geometric meaning of curvature as the first order deviation from the Euclidean metric in geodesic normal coordinates.

 \bs
   In section \ref{section:TheModel} we introduce our local-stochastic volatility model, and we
compute the Laplace-Beltrami operator, the metric and the curvature associated with the model (the metric is induced by the inverse of the diffusion coefficient matrix).  We then state the technical assumptions on the coefficients in the SDE\ja{s}, the most important of which is that $\al(y) \sim A_1 y$ as $y \to 0$ and $\alpha(y) \sim B_1 y^p$ as $y \to \infty$ for some constants $A_1,B_1>0$ and $p \in (0,1]$, where $\al(y)$ is the vol-of-vol coefficient.  Here $\sim$ means that the ratio between the two sides tends to 1 in the limit. This ensures that the associated Riemannian manifold on the upper half plane is complete\ja{: }the distance to $y=0$ and $y=\infty$ under the metric $g_{ij}$ is infinite, \ja{which} ensures that $y=0$ and $y=\infty$ are unattainable boundaries for the volatility process $Y_t$.  We then impose that the manifold $M$ has negative curvature (recall that in two dimensions the sectional and the Gaussian curvature are the same, which by Hadamard's theorem, implies that the cut locus of $M$ is empty), and we discuss some simple well known examples of parametric stochastic volatility models.

 \bs

    In Theorem \ref{thm:SmallTCallsOTM}, we give the main result of the article - a small-time expansion for non at-the-money call options under the aforementioned model.  This effectively sharpens the result obtained in  \cite{FJ11} and relaxes the assumption of bounded volatility to allow for more realistic tail behaviour (e.g.\ moment explosions).  \bl{Our model setup can include an extended SABR-type model with $\rho=0$ and an affine mean-reverting drift coefficient and $\beta$ parameter equal to 1, and the exponential Ornstein-Uhlenbeck model (see Table \ref{table:sv})}.  The proof follows the steps of Paulot \cite{Pau10} but with full rigour\ja{. W}e use Laplace's method to integrate the two-dimensional heat kernel with respect to Lebesgue measure multiplied by the local volatility squared to compute a small-time expansion for $\frac{1}{dx}\mathbb{E}(\sigma(X_t)^2 Y_t^2 1_{X_t \in dx})$ where $Y_t$ is the volatility and $X_t$ is the log forward price.  The tail integrals are dealt with using the Davies \cite{Dav88} upper bound for the heat kernel on a Riemannian manifold with Ricci curvature bounded from below combined with a gauge transformation.  We then use the Tanaka-Meyer formula to estimate the price of a call option in terms of $\frac{1}{dx}\mathbb{E}(\sigma(X_t)^2 Y_t^2 1_{X_t \in dx})$ by integrating over time from zero to the maturity of the option, and using well known asymptotic results for the standard \ja{n}ormal distribution function.  This trick using the gauge transformation only works when $\rho=0$, unless we impose a specific functional form for the drift of the volatility process and we assume that the local volatility function $\sigma(x)$ is constant (see subsection \ref{subsection:RhoNotZero} for details on this)\ja{. }In particular we show that SABR model with $\beta=1,\rho \le 0$ can still be handled using this trick.  To the best of our knowledge, this is the first rigorous analysis of the small-time correction term for implied volatility under SABR model (the leading order term is computed using viscosity solutions in  \cite{BBF2}).

    \bs
    In Appendix \ref{section:CalculatingA}, we discuss how to explicitly compute the drift correction term that appears in the Bellaiche expansion, and when there is no local volatility component, this term takes an especially simple form.  In principle, we can also formally derive a similar expansion for at-the-money call options, for which the small-time behaviour is qualitatively different, but this requires knowledge of the next term in Bellaiche heat kernel expansion (which is not given by Bellaiche), and we do not have a published reference for this next term, aside from slides by Laurence \cite{Laur08}, so we defer the details for future work.

 \bs
In Proposition \ref{propn:BSSmallT}, we derive the corresponding expansions for call options using the Black-Scholes call option formula but with a time-dependent volatility function.  This is needed in section 5, where we derive the correction terms for the implied volatility of non at-the-money options.  The correction term for implied volatility is important because it takes account of the drift terms in the SDEs, which the result in  \cite{FJ11} fails to capture because Freidlin-Wentzell theory only works on a crude logarithmic scale.  The correction term is also required to accurately approximate the price of a call option at small maturities.  In section 7 we give closed-form formulae for all expressions of interest for the well known SABR model, and we verify our implied volatility expansion numerically by comparing against Monte Carlo simulations and the closed-form expression for the price of a call option under the SABR model with $\rho=0,\beta=1$ given in  \cite{AS12}\ja{. W}e find they are all in very close agreement.  Finally, in section 8, we enrich the model by adding a single Poisson jump-to-default with hazard rate $\lm$, and when $\rho=0, \sigma(x) \equiv 1$, we show that the jump-to-default increases the implied volatility at log-moneyness $x$ by the following amount
\bq
\label{eq:cor_}
 \lm \frac{\hat{\sigma}^2(x)}{|x| \, y_1^*(x)} t +o(t)
\eq
for $x>0$ as $t\to 0$, where $\hat{\sigma}(x)$ is the zeroth order implied volatility and $y_1^*(x_1)$ is the $y$-value of the intersection point for the shortest geodesic from the point $(x_0,y_0)$ to the vertical line $\lb x=x_1 \rb $ under the metric induced by the diffusion coefficient for the model, and the correction term in \eqref{eq:cor_} blows up as the log-moneyness $x \searrow 0$.

\bs

Throughout the article, we write $a=o(b)$ if and only if the limit of $a/b$ is 0 as some parameter of interest tends to zero or infinity (depending on the context), and write $a=O(b)$ if and only if $\limsup |a/b|<\infty$. If $a=O(b)$ and $b=O(a)$ hold simultaneously, we write $a\asymp b$.

\bs

\section{The Heat kernel expansion}
\label{section:LBandHeatKernel}

\sk
\sk

\sk
Consider a diffusion process on $\mathbb{R}^n$ with infinitesimal generator $L$.  In local coordinates, $L$ takes the form
\bq
\label{eq:SK}
L ~\frac{1}{2}\sum_{1 \le i,j\le n}a^{ij}(\mb{x})\pdd{^2}{ x_i \partial x_j}+ \sum_{1 \le i \le n} b^i(\mb{x}) \pdd{}{x_i}\,\quad\mb{x}=(x_1,\ldots,x_n)\in\mathbb{R}^n.\nn
\eq
\nind Let $M=\mathbb{R}^n$ with metric $(g_{ij})=(a^{ij})^{-1}$ so that $M$ is a smooth Riemannian manifold with a single chart given by the identity map.  We can write $L$ as $\frac{1}{2}\Delta+\mathcal{A}$, where $\Delta=\sum_{i,j}\frac{1}{\sqrt{|g|}}\partial_i (\sqrt{|g|}\,g^{ij}\partial_j)$ is the Laplace-Beltrami operator and $\mathcal{A}^i=b^i-\frac{1}{2}\sum_j \frac{1}{\sqrt{|g|}}\partial_j (\sqrt{|g|}\,g^{ij})$ is a smooth first order differential operator and $|g|=|\det g_{ij}|$ (recall that $(g^{ij})=(g_{ij})^{-1}$).

\bs

    The heat kernel of $L$ is a continuous function $p_t(\mb{x},\mb{y})=p(\mb{x},\mb{y},t)$ defined on $M\times M \times (0,\infty)$ which is twice differentiable in $\mb{x}$ and once differentiable in $t$, which satisfies the backward Kolmogorov equation
  \bq
  (-\d_t + L )p ~ 0 \,,
  \eq
  \nind 
\footnote{Technically this is a forward equation in the $t$ variable and a backward equation in $\textbf{x}$, but it becomes a forward equation if we make the usual $t \mapsto T-t$ transformation.} such that for any bounded continuous function $f$ with compact support we have
\bq
\lim_{t \to 0}   \int_M \hat{p}_t(\mb{x},\mb{y})f(\mb{y})   d\mb{y} &=& f(\mb{x})\,,\quad \mb{y}=(y_1,\ldots,y_n)\in M,
\label{eq:Dirac}
\eq
where $\hat{p}_t(\mb{x},\mb{y}):=p_t(\mb{x},\mb{y})\sqrt{|g(\mb{y})|}$
\footnote{Note that we typically write \eqref{eq:Dirac} informally as $\hat{p}_0(\mb{x},\mb{y})=\delta(\mb{y}-\mb{x})$, but \eqref{eq:Dirac} really says that $\hat{p}_t(\mb{x},\mb{y})d\mb{y}$ tends to the Dirac measure $\delta_{\mb{x}}(\mb{y})$ in the sense of weak convergence as $t \to 0$.}(see e.g. page 135 in Chavel\cite{Chav84} for details, or Definition 5.7.9 in \cite{KS91} for a non-geometric reference).  $\hat{p}_t(\mb{x},\mb{y})$ is the probability density of $X_t$ conditioned on $X_0=\mb{x}$, with respect to Lebesgue measure $d\mb{y}=dy_1\ldots dy_n$ (see e.g. equation (5.7.26) in \cite{KS91}).  Furthermore,  if $\mc{A}=0$, then $p_t(\mb{x},\mb{y})$ is symmetric in $\mb{x}$ and $\mb{y}$ (see e.g. Theorem 1, chapter VI on page 138 of \cite{Chav84}).

\sk
\begin{rem}
 For a simple example to show the connection between the heat kernel $p_t(\mb{x},\mb{y})$ and the transition density $\hat{p}_t(\mb{x},\mb{y})$, consider a one-dimensional diffusion which satisfies $dX_t=\half\sigma^2X_tdt+\sigma X_tdW_t$ and $X_0=x\in\mathbb{R}_+$, where $\sigma>0$ is a constant and $W$ is a standard Brownian motion. Then one can easily verify that the Laplace-Beltrami operator $\Delta=\sigma^2x\,\d_x+\sigma^2x^2\d^2_x $ and $\mc{A}=0$.
Moreover, the Riemmanian metric $g_{ij}$ on $\mathbb{R}$ associated with this model has a single element $g_{11}(x)=\frac{1}{\sigma^2x^2}$, so the pre-factor $\sqrt{|g(\mb{y})|}$ is  $\frac{1}{\sigma y}$. On the other hand, we know that the transition density for $X$ is given by $\hat{p}_t(x,y)=\frac{1}{\sigma y\sqrt{2\pi t}}e^{-(\log\frac{y}{x})^2/(2\sigma^2 t)}$. Thus, the heat kernel is $p_t(x,y)=\frac{1}{\sqrt{2\pi t}}e^{-(\log\frac{y}{x})^2/(2\sigma^2 t)}$, a function that is symmetric in $x$ and $y$
 (see e.g. Chapters 5 and 6 of \cite{HL08} for more examples of this nature).\end{rem}
\sk


Intuitively, one expects that the heat kernel on a Riemannian manifold should be a deformation of the heat kernel on Euclidean space. Molchanov  \cite{Mol75} made this idea precise by providing a small-time expansion for the heat kernel on a compact manifold.  Subsequent authors have extended this result to more general manifolds. We state two such results.

\sk

\begin{thm} \label{thm:Hsu}(Theorem 5.1.1 in  \cite{Hsu02}).
Let $M$ be a complete $n$-dimensional Riemannian manifold. Let $C(M)$ be the subset of points $(\mb{x},\mb{y})$ in $M \times M$ such that $\mb{x}$ lies in the cut locus of $\mb{y}$, which we denote by $\mathrm{Cut}(\mb{y})$.  Let $d(\mb{x},\mb{y})$ be the Riemannian distance between two points $(\mb{x},\mb{y}) \in (M \times M ) \setminus C(M)$. Let $p_t(\mb{x},\mb{y})$ denote the heat kernel of $\frac{1}{2} \Delta$ on $M$ (i.e. $\mc{A}=0$).  Then there exist smooth functions $u_n(\mb{x},\mb{y})$ defined on $(M \times M) \setminus C(M)$
such that the following asymptotic expansion
\bq
 p_t(\mb{x},\mb{y}) \,\,\,\, \sim \,\,\,\, (2 \pi t)^{-n/2} e^{-d(\mb{x},\mb{y})^2/(2t)} \sum_{i=0}^\infty u_i(\mb{x},\mb{y}) t^i
 \eq
holds uniformly as $t \to 0$ over any compact subset of $(M \times M) \setminus C(M)$.
Let $\exp_{\mb{x}}:T_{\mb{x}} M \longrightarrow M$ be the exponential map based at $\mb{x}$, then we have
\bq  u_0(\mb{x},\mb{y}) ~ \left(J( \exp_{\mb{x}} )(Y)\right)^{-1/2}, \quad  \quad Y ~ \exp_{\mb{x}}^{-1} \mb{y}
\label{eq:u0}
\,.\eq
Here $J( \exp_{\mb{x}})$ denotes the Jacobian of the exponential map where we use the
flat metric induced by $g$ on $T_{\mb{x}} M$ to define the Jacobian (see Remark \ref{rem:2.1} below).
\end{thm}

\sk
\begin{rem}
\label{rem:2.1}
Recall that if $f:M_1 \longrightarrow M_2$ is a differentiable map between oriented Riemannian $n$-manifolds $(M_1,g_1)$ and $(M_2,g_2)$, then $J(f)(p)$ is defined to be the ratio of the pullback of the volume form on $M_2$ to the volume form on $M_1$ at $p$. Let us explain this definition in detail. The differential of $f$ at $p$ defines a map $f_*:T_p M_1 \longrightarrow T_q M_2$ where $q=f(p)$. The pullback $f^*: \Lambda^k T^*_q M_2 \longrightarrow \Lambda^k T^*_p M_1$ is then defined by:
\bq (f^*\mu)(v_1 \wedge \ldots\wedge v_k) &=& \mu( f_* v_1, \ldots, f_* v_k), \eq
where $v_1, \ldots v_k \in T_p M_1$ and $\mu$ is a $k$-form. Now take local coordinates ${x}$ for $M_1$ and ${y}$ for $M_2$ centered on $p$ and $q$. The space $\Lambda^n T^*_p$ is one dimensional and hence spanned by the volume form $\sqrt{|g_1|}\, dx^1 \wedge \ldots \wedge dx^n$. Thus for some $\lambda \in \R$ we have:
\bq
\label{eq:5}
f^*( \sqrt{|g_2|}\, dy^1 \wedge \ldots \wedge dy^n) &=& \lambda \sqrt{|g_1|}\, dx^1 \wedge \ldots \wedge dx^n \,.
\eq
By definition $J(f)(p)$ is equal to $\lambda$. See subsections \ref{subsection:NormalCoords} and 2.2 to see how the expression for $J( \exp_{\mb{x}} )(Y)$ simplifies when we work in geodesic normal coordinates.
\end{rem}

\bs

We now recall the following extension of Theorem \ref{thm:Hsu} by Bellaiche \cite{Bel81}:

\sk
\sk

\begin{thm} \label{thm:Bellaiche} (Theorem 4.1 in  \cite{Bel81}).  Let $M$ be a $C^4$-Riemannian manifold and $\mathcal{A}$ a $C^4$-vector field. Then the heat kernel $p_t(\mb{x},\mb{y})$ of the operator $\frac{1}{2}\Delta + \mathcal{A}$ satisfies:
\bq \label{eq:MolchanovFormula}
 p_t(\mb{x},\mb{y}) \,\,\,\, \sim \,\,\,\,  (2\pi t)^{-n/2} u_0(\mb{x},\mb{y}) \,e^{-\half d(\mb{x},\mb{y})^2/t+A(\mb{x},\mb{y})} \quad \quad \quad \quad (t \to 0)
\eq
for $(\mb{x},\mb{y}) \in (M \times M) \setminus C(M)$. Here $u_0$ is defined as in \eqref{eq:u0} and
\bq
A(\mb{x},\mb{y})\,\,\,\,:= \,\,\,\,\int_0^1 \langle \mathcal{A},\dot{\gamma}(s) \rangle \,ds
\label{eq:Aofxandy}
\eq
   and $\gm$ is the unique distance-minimizing geodesic \footnote{All geodesics have constant speed, and here we are just choosing the speed so the geodesic reaches point $\textbf{y}$ in unit time}  $\gamma:[0,1]\longrightarrow M$ joining $\mb{x}$ and $\mb{y}$. The estimate \eqref{eq:MolchanovFormula} is uniform on any compact subset of $(M \times M) \setminus C(M)$.
\end{thm}

\sk

\subsection{Geodesic normal coordinates and the geometric meaning of curvature}
\label{subsection:NormalCoords}
\bs

Now let $e_1,..,e_n$ be a basis of $T_p M$ which is orthonormal with respect to the scalar product on $T_p M$ induced by the metric $g_{ij}$.  For each vector $v \in T_p M$, writing its components with respect to this basis, we obtain a map
$\phi : T_p M  \mapsto  \mathbb{R}^n$, $v=v^i e_i \mapsto (v^1,...,v^n)$.  Then one has the associated geodesic normal coordinate system ${y}$ given by ${y}=\phi\circ \exp_{\mb{x}}^{-1}$ (see page 21 in  \cite{Jost09}). In these coordinates, all first order partial derivatives of the metric vanish at zero, i.e. $g_{ij,k}(0)=0$ for all $i,j,k$, and $\Gamma^i_{jk}=0$, and the metric has the following Taylor expansion:
\bq
 g_{ij} &=& \delta_{ij} \,- \,\sum_{a,b} \frac{1}{3}( R_{i a j b} + R_{i b j a} )y_a y_b + O(|{y}|^3)  \nn
 \eq
(see e.g. \cite{MA73}), where $R^{i}_{jkl}$ is the Riemann curvature tensor and we are following standard conventions for raising and lowering indices using the metric. This formula provides the basic geometric interpretation of curvature as the deviation of the metric from the Euclidean metric in normal coordinates.  Indeed, Riemann originally introduced  curvature using such an expansion; the definition using the Levi-Civita connection was only introduced later.

\bs

\subsection{Calculating $J( \exp_{\mb{x}} )(Y)$ and $u_0(\mb{x},\mb{y})$ in normal coordinates}
\label{subsection:JinNormalCoords}

\bs

If we now consider Remark \ref{rem:2.1} for special case when $M_1=T_{\textbf{x}} M$, $M_2=M$, $f=\exp_{\textbf{x}}$, $p=\textbf{x}$ and $q=\mb{y}=\exp_{\textbf{x}}(Y)$, and take coordinates for the tangent space ${\phi}: M_1= T_{\textbf{x}} M \to \mathbb{R}^n$ as above and normal coordinates ${y}:M \to \mathbb{R}^n$ on $M$, then written in these coordinates $f$ and $f_*$ are just the identity function.  \ja{The metric $g_1=\delta_{ij}$. The metric $g_2=g_{ij}$ is the metric associated with using normal coordinates on $M$ (which is the identity at $\textbf{x}$). Equation \eqref{eq:5} becomes:}
\bq
\exp^* \left(\sqrt{|g|} \, dy^1 \wedge ... \wedge dy^n \right) &=& \lambda \, d\phi^1 \wedge ... \wedge d\phi^n \,.\label{eq:10_page6}
\eq
Using that $\exp^*$ is also the identity, we see that in these coordinates $\lm=J( \exp_{\mb{x}} )(Y)=\sqrt{|g|}$.

\bs
From this we also obtain
\begin{eqnarray*}
u_0(\mb{x},\mb{y}) & = & ( \sqrt{|g|} )^{-\frac{1}{2}} \\
         & = & \bigg[ 1 \,-\, \sum_{i,a,b} \frac{1}{3}( R_{i a i b} + R_{i b i a}) y_a y_b + o(|{y}|^2) \bigg]^{-\frac{1}{4}} \\
         & = & 1 + \frac{1}{12} \sum_{i,a,b} (R_{i a i b} + R_{i b i a})y_a y_b + o(|{y}|^2)\,.
\end{eqnarray*}
 We now specialize to the two-dimensional case. The symmetries of the curvature tensor tell us that it has only one independent component. At the origin in normal coordinates one has $R_{1212}=R_{2121}=-R_{1221}=-R_{2112}=\kappa$, where $\kappa$ is the Gaussian curvature of the surface. All the other components of $R$ vanish.  Thus we have
\begin{eqnarray*}
u_0(\mb{x},\mb{y}) & = & 1 + \frac{1}{12} \kappa |{y}|^2 + o(|{y}|^2) \nn \,.
\end{eqnarray*}
Alternatively, one can introduce geodesic polar coordinates $y=(r \cos( \theta), r \sin( \theta ))$ in which case one obtains the same expansion but with $|y|$ replaced by $r$ and now $r=d(\mb{x},\mb{y})$ is the Riemannian distance.

\bs

\subsection{Geodesic polar coordinates}

\bs

 Let $n=2$ and let $f(r,\theta)=\exp_p(r v(\theta))$ for $p \in M$, with $0< r < i(p)$ (where $i(p)=d(p,C_p)$ is the injectivity radius at $p$) and $-\pi <\theta  \le \pi$, where $|v(\theta)|=1$, $|v'(\theta)|=1$.  $(r,\theta)$ are called geodesic polar coordinates at $p$.
\nind In these coordinates, the metric $\mathbf{g}(r,\theta)$ has coefficients
$$
g_{rr}\,=\, \left|\pdd{f}{r} \right|^2\,=\,|v(\theta)|^2=1, \, g_{r\theta}\,=\,0\,\,,\,\,g_{\theta\theta}\,=\, \left|\pdd{f}{\theta} \right|^2\,
$$
\nind   (see page 122 in do Carmo \cite{doC92}), where $|\cdot|$ refers to the norm under the original metric $g$.  Then $J(r)=\pdd{f}{\theta}|(r,\theta)$ is a \textit{Jacobi field}, and $J$ has the asymptotic behaviour $|J|^2=r^2-\frac{1}{3}\kappa r^4+ o(r^4)$ as $r \to 0$.  Thus the metric $\mathbf{g}(r,\theta)$ can be locally approximated as
\bq
ds^2 &\approx& dr^2 + r^2 \left( 1-\frac{1}{3}\kappa r^2 \right) d\theta^2\, ,\nn
\eq
\nind for $r \ll 1$, and we see that the curvature describes the departure of the metric from the usual metric $ds^2=dr^2+r^2 d\theta^2$ for polar coordinates in $\mathbb{R}^2$.

\sk
\begin{rem}
The heat kernel can be constructed geometrically by the method of parametrix starting from an approximate heat kernel in local coordinates.  Page 148 in Chavel \cite{Chav84} gives a nice sketch proof using geodesic polar coordinates.  In these coordinates, we can write the $u_0(\mb{x},\mb{y})$ term in the heat kernel expansion as
\bq u_0(\mb{x},\mb{y})~\bigg(\frac{\sqrt{\mb{g}(r,\theta)}}{r}\bigg)^{-\frac{1}{2}} \nn \,.
\eq
\end{rem}

\begin{rem}
If $d(\mb{x},\mb{y})<d(\mb{x},\mathrm{Cut}(\mb{x})$), then in local coordinates $u_0(\mb{x},\mb{y})=\mf{\sqrt{\Delta^{\mathrm{VVM}}(\mb{x},\mb{y})}}$, where
\bq
\label{eq:VVM}
\Delta^{\mathrm{VVM}}(\mb{x},\mb{y})~g(\mb{x})^{-\frac{1}{2}}\det \left( -\pdd{^2 \phi(\mb{x},\mb{y})}{x_{i} \partial y_{j}} \right)g(\mb{y})^{-\frac{1}{2}}
\eq
\nind is the so-called \textit{van Vleck-Morette determinant} (see McAvity \& Osborn \cite{MO91} and equation (4.38) in Vassilevich \cite{Vass03}), and
$\phi(\mb{x},\mb{y})=\frac{1}{2}d(\mb{x},\mb{y})^2$.  \eqref{eq:VVM} is useful when we can compute $d(\mb{x},\mb{y})$ explicitly by solving the geodesic equations.
\end{rem}

\sk
\sk

\section{Local-stochastic volatility models}
\label{section:TheModel}
\bs

\label{subsection:TheModel}

We work \bl{with a probability space} $(\Omega,\mc{F},\mathbb{P})$ throughout, with a filtration $\mc{F}_t$ supporting two independent Brownian motions which satisfies the usual
conditions.

\sk
\sk

We now consider a general uncorrelated local-stochastic volatility model for a forward price process $S_t$ defined by
the following stochastic differential equations for $X_t=\log S_t$:
 \bq
\label{eq:Model}
        \left\{
        \begin{array}{ll}
        dX_{t}\,=\,-\frac{1}{2}\sigma(X_t)^2Y_t^2 dt+\sigma(X_t)Y_t \,
dW^1_t\,,\\
        dY_{t}\,\,=\,\mu(Y_t)dt+  \alpha(Y_t) dW^2_t,
        \end{array}\
        \right.
        \eq

\nind where $(X_0,Y_0)=(x_0,y_0)\in\mathbb{R}\times\mathbb{R}_+$, and $W^1, W^2$ are two independent standard Brownian motions.  We need to impose that the correlation is zero for the gauge transformation trick in subsection \ref{subsection:Gauge} to work.  However, the presence of the local volatility component $\sigma(x)$ can still produce an implied volatility skew.

\bs
 We let $M$ denote the upper half plane $\lb (x,y):y>0 \rb$ with Riemmanian metric $(g_{ij})=(a^{ij})^{-1}$, where $a_{ij}$ is the diffusion coefficient for the model in \eqref{eq:Model}, so the line element for $g$ is given by
\bq\label{eq:SVmetric} ds^2 ~  \sum_{i,j} g_{ij} dx_i dx_j ~  \frac{1}{\sigma(x)^2 y^2}dx^2\,+\, \frac{1}{\al(y)^2 }dy^2\,,\eq
and the Laplace-Beltrami operator $\Delta=\sum_j \frac{1}{\sqrt{|g|}}\partial_j (\sqrt{|g|}\,g^{ij})$ satisfies
\bq
\label{eq:LBoperator}
\half \Delta &=&
\half  y^2 \sigma'(x)\sigma(x)\p_x \,+\, \half \left[\al'(y) \al(y)  -\frac{\al(y)^2}{y} \right] \p_y \,+\, \half y^2 \sigma(x)^2 \p^2_x\,+\, \half \alpha(y)^2 \p^2_y \,,
\eq
so we have
\bq
\mc{A} &=&  \left[-\frac{1}{2}y^2 \sigma(x)^2-\half  y^2 \sigma'(x)\sigma(x) \right] \p_x\,+\, \left[\mu(y) \,-\, \half (\al'(y) \al(y)  -\frac{\al(y)^2}{y}) \right] \p_y \nn \,.
\eq

\begin{rem}
See Appendix \ref{section:CalculatingA} for details on how to calculate $A(\mb{x},\mb{y})$ \mf{as defined in \eqref{eq:Aofxandy}}.
\end{rem}

\sk

 We can easily compute the curvature tensor for the metric $g$ directly from the standard formulae for the Christoffel symbols in local coordinates and standard formulae for the curvature tensor $R$.\footnote{The  Mathematica sheet to calculate the curvature is available on request.}  We can then compute the Gaussian curvature as
\bq
 \kappa &=& \frac{R_{1212}}{g_{11}g_{22} - g_{12}^2}\,, \nn
 \eq
and from this we see that for any $(x,y)\in M$, we have\footnote{\bl{The formula for $\kappa(x,y)$ in \eqref{eq:kappa:GeneralModel} is also valid when $W^1$ and $W^2$ are correlated.}}
\bq
\label{eq:kappa:GeneralModel}
\kappa(x,y) &=& \frac{\alpha(y)(-2 \alpha(y) + y \alpha^\prime(y))}{y^2}. \,
\eq

(note that $\kappa  (x,y)$ does not depend on $x$ or $\sigma(x)$)
\sk
We now make the following additional assumptions:

\sk

\begin{Assumption}
\label{Assumption:TechnicalConditions}
\quad
\begin{itemize}
\item $\mu,\alpha,\sigma$ are $C^{\infty}$ and $\alpha,\sigma$ are strictly positive and $\al$ is strictly increasing and $\sigma \in C^2_b$, and $\mu$, $\al$ are such that \bl{$Y_t$ has a unique strong solution for any given $Y_0>0$ (e.g. Theorem 2.9 in  \cite{KS91}).}  And
 $\sigma$ is Lipschitz continuous.
\item $\alpha(y) \sim A_1 y$, $\alpha'(y) \sim A_1=\alpha'(0+)$, \bl{$\alpha''(y)\to 0$} as $y \to 0$, and $\alpha(y) \sim B_1 y^p, \alpha'(y)\sim B_1py^{p-1}, \bl{\alpha''(y)/y^{p-2}\to B_1p(p-1)}$ as $y \to \infty$ for some constants $A_1,B_1>0$ and $p \in (0,1]$.  This ensures that the associated Riemannian manifold on the upper half plane is complete - the distance to $y=0$ and $y=\infty$ under the metric $g_{ij}$ is infinite, and ensures that $y=0$ and $y=\infty$ are unattainable boundaries.  The condition at $y=\infty$ ensures that $X_t$ has a fatter (and thus more realistic) right tail than if we chose a bounded volatility function $f(y)$ as in  \cite{FJ11} (see  \cite{AP07},  \cite{Jour04},  \cite{LM07} for more details).

\item \bl{ For all sufficiently small $y>0$, $\mu(y)\ge 0$ and $\mu(1/y)\le 0$. Moreover, $\mu$ is such that, as $y\to0$,  $\overline{V}(y)$ and $\overline{V}(1/y)$ are bounded from above, where
\[\overline{V}(y):= \mu(y)g(y)+\half\alpha(y)^2 \left[g(y)^2+g'(y) \right],\text{ with }g(y):=-\frac{\mu(y)}{\alpha(y)^2}+\half\bigg(\frac{\alpha'(y)}{\alpha(y)}-\frac{1}{y}\bigg).\]
} These assumptions are needed to make the Gauge transformation trick with the Davies heat kernel estimate work.

\item $0<   \underline{\sigma} \le \sigma \le \bar{\sigma} < \infty$ for some constants $\underline{\sigma},\bar{\sigma}$.

\item We assume that $\sigma(x)^2 \,+\,\sigma'(x)^2\,-\,2\sigma(x)\sigma''(x)>0$ for all $x$, which is clearly true if $\sigma$ is constant.  This condition is required for the gauge transformation trick to work, and essentially just excludes excessive skew/convexity of the local volatility function $\sigma(x)$.

\end{itemize}
\end{Assumption}

\sk

\begin{prop}
\label{prop:EandU}
Under Assumption \ref{Assumption:TechnicalConditions}, the system of two-dimensional stochastic differential equations in \eqref{eq:Model} has a unique strong solution.
\end{prop}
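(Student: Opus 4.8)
The plan is to establish existence and uniqueness of a strong solution to the two-dimensional system \eqref{eq:Model} in two stages, exploiting the fact that the system is partially decoupled: the volatility equation for $Y_t$ does not involve $X_t$. First I would solve the scalar SDE $dY_t = \mu(Y_t)\,dt + \alpha(Y_t)\,dW^2_t$ with $Y_0 = y_0 > 0$. Since Assumption \ref{Assumption:TechnicalConditions} directly imposes that $\mu$ and $\alpha$ satisfy the Lipschitz and linear-growth hypotheses of Theorem 2.9 in \cite{KS91}, that theorem immediately yields a unique strong solution $Y_t$ on $[0,\infty)$. The remaining point here is that $Y_t$ stays strictly positive, so that the upper-half-plane state space is respected; this follows from the boundary analysis already summarized in Assumption \ref{Assumption:TechnicalConditions} — the asymptotics $\alpha(y) \sim A_1 y$, $\mu(y)\sim \mu_0 y$ as $y\to 0$ make $y=0$ an unattainable (natural/entrance-type) boundary, which can be verified via Feller's boundary classification or, more elementarily, by a comparison argument with a geometric-Brownian-motion-type lower bound together with a stopping-time/localization argument ruling out explosion to $0$ in finite time. (Likewise $y=\infty$ is unattainable, which is anyway contained in the non-explosion guaranteed by linear growth.)

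Second, with the path $t\mapsto Y_t$ now fixed, I would treat the $X$-equation
\[
X_t = x_0 - \frac{1}{2}\int_0^t \sigma(X_s)^2 Y_s^2\,ds + \int_0^t \sigma(X_s) Y_s\,dW^1_s
\]
as a scalar SDE with random, time-dependent coefficients $b(s,x) = -\tfrac12 \sigma(x)^2 Y_s^2$ and $\varsigma(s,x) = \sigma(x) Y_s$. Here I use that $\sigma$ is Lipschitz and bounded, $0 < \underline\sigma \le \sigma \le \bar\sigma < \infty$ (Assumption \ref{Assumption:TechnicalConditions}): since $\sigma$ and $\sigma^2$ are globally Lipschitz (a bounded Lipschitz function has Lipschitz square) and $Y_s$, $Y_s^2$ are continuous adapted processes, the coefficients are pathwise Lipschitz in $x$ uniformly on compact time intervals with a (random but a.s.\ finite, locally bounded) Lipschitz constant $K_t := C\sup_{s\le t}(Y_s + Y_s^2)$. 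The standard Picard iteration / contraction-mapping argument for SDEs with random Lipschitz coefficients — combined with Gronwall's inequality and a localization in $t$ (or working on the events $\{\sup_{s\le t} Y_s \le N\}$ and letting $N\to\infty$) — then gives a unique strong solution $X_t$. Strong uniqueness for the pair $(X,Y)$ follows because $Y$ was uniquely determined in Stage 1 and, given that $Y$, $X$ is uniquely determined in Stage 2.

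I expect the only genuinely nontrivial step to be verifying strict positivity / non-attainability of the boundary $y = 0$ for $Y_t$, since this is what guarantees the solution lives on $M$ and is not merely a formal consequence of the Lipschitz/linear-growth conditions (which by themselves only prevent explosion, not boundary attainment). The cleanest route is to invoke Feller's test for explosions on $(0,\infty)$ using the scale function and speed measure determined by the near-zero asymptotics of $\mu,\alpha$ in Assumption \ref{Assumption:TechnicalConditions}; alternatively, since the manifold $M$ with metric $g_{ij}$ is complete (distance to $y=0$ and $y=\infty$ infinite, as asserted in the Assumption), one may appeal to stochastic completeness / the fact that a diffusion cannot reach an infinitely-distant boundary in finite time. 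Everything else — the $X$-equation, the patching of the two stages, the Gronwall estimates — is routine. If one prefers to avoid the boundary discussion entirely, one can instead extend $\mu$ and $\alpha$ smoothly and with linear growth to all of $\R$ (e.g.\ reflecting the near-zero linear behaviour), solve globally on $\R$ by \cite{KS91}, and then observe a posteriori that the solution started at $y_0>0$ never leaves $(0,\infty)$; but the honest statement still requires the non-attainability argument above.
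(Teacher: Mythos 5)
Your proposal is correct and is essentially the detailed version of what the paper dismisses in one line as ``Follows from standard arguments'' (and what the introduction previews as ``standard Lipschitz arguments and Gronwall's inequality''): the natural decoupling (solve the autonomous $Y$-SDE via Theorem 2.9 of \cite{KS91}, then treat the $X$-SDE as a scalar equation with random, pathwise-Lipschitz coefficients and apply Picard iteration plus Gronwall) is exactly the intended route. You also usefully flag the one genuinely non-routine point the paper glosses over — that $Y_t$ must be shown to remain in $(0,\infty)$, which requires the near-boundary asymptotics of $\mu,\alpha$ (via Feller's test or a comparison with geometric Brownian motion) and not merely the Lipschitz/linear-growth hypotheses — so your write-up is, if anything, more honest than the paper's.
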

\begin{proof}
Follows from standard arguments.
\end{proof}

\bs

\begin{Assumption}
\label{Assumption:NegCur}
 We assume that $-2 \alpha(y) + y \alpha^\prime(y)
   \le 0$ which implies that $\kappa(x,y)\le 0$ $\,\,\forall (x,y) \in M$, which (by Hadamard's theorem, see page 149 in  \cite{doC92}) implies that the cut locus of $M$ is empty. \end{Assumption}

\sk

\bl{\begin{rem}\label{rmk:410}
We list a few stochastic volatility (SV) models that satisfy both Assumption \ref{Assumption:TechnicalConditions} and Assumption \ref{Assumption:NegCur} in Table \ref{table:sv}.
\begin{table}[ht]
\caption{Examples of SV models}
\centering
\begin{tabular}{|c|c|c|}
\hline
\hline
Model & $\mu(y)$ & $\alpha(y)$\\ [0.5ex] 
\hline
SABR with $\beta=1$&0&$\nu y$ with $\nu>0$\\
\hline
Mean-reverting SABR with $\beta=1$&$\eta(\theta-y)$ with $\eta,\theta>0$&$\nu y$ with $\nu>0$\\
\hline
Exponential OU&$\eta(\theta-\log y)y$ with $\eta>0$, $\theta\in\mathbb{R}$& $\nu y$ with $\nu>0$\\
\hline
\end{tabular}
\label{table:sv}
\end{table}
\end{rem}
}

\begin{rem}
\label{rem:CurBoundBelow}
Using \eqref{eq:kappa:GeneralModel}, we see that $\kappa(x,y)$ is smooth for all $(x,y) \in M$ and independent of $x$, and
\begin{alignat*}{2}
\kappa(x,y) \sim& -A_1^2  \quad \quad\quad\quad &&(y \to 0), \,\nn \\
\kappa(x,y) \sim& - B_1^2 (2-p) y^{2(p-1)}\,\to \bl{\begin{cases}-B_1^2&\text{if }p=1\\0\quad &\text{if }p\in(0,1)\end{cases}} \quad \quad\quad &&(y \to \infty)\,, \nn \,
\end{alignat*}
 so $\kappa$ is \bl{bounded from  below}.
\end{rem}

\sk

\begin{rem}
Our conditions include the SABR model for $\beta=1$ (which corresponds to $p=1$) but not the Heston model, because for the latter the associated manifold is not complete, and completeness is needed for the Davies heat kernel estimate below.  Small-time asymptotics for the Heston model are obtained in  \cite{FJL12} using Fourier methods and saddle point estimates for contour integrals.
\end{rem}


\sk
\subsection{Tail behaviour of the model}

\sk
\sk

For the SABR model with zero correlation (i.e. $\al(y)=\al y$, $\mu(y)=0$), it is well known that for $m>0$, $\mathbb{E}(S_t^{m})<\infty$ if and only if $m \le 1$.  If $d\langle W^1, W^2\rangle_t=\rho dt$ with $\rho<0$, this condition changes to $\mathbb{E}(S_t^m)<\infty$ if and only $\rho \le -\sqrt{(m-1)/m}$ (see Theorem 2.3 in  \cite{LM07}).  Moreover, by Theorems 2.5 and 2.6 in  \cite{LM07}, this result also applies to our model in \eqref{eq:Model} if $p=1$ because $\al(y)$ does not have quadratic growth at $y=\infty$, i.e. the $b_{\infty}$ term in equation (29) in  \cite{LM07} is zero.  For $p<1$, the conditions are more complicated, we refer the reader to Theorem 3.2 in  \cite{LM07} for details.

\bs

\subsection{Example: the SABR model}

\sk
\sk

\sk
\sk

 For the hyperbolic metric $ds^2=\frac{1}{y^2}(dx^2+dy^2)$ on the upper half plane $\mathbb{H}^2$ (which is associated with the SABR model $dS_t=S_t^{\beta} Y_t dW^1_t, dY_t= Y_t dW^2_t$, $d W^1_t dW^2_t =\rho dt$ with $\beta=1,\rho=0$, see section \ref{section:SABR}), we have that $\kappa=-1$ (see e.g. Molchanov \cite{Mol75} or chapter 5 in  \cite{doC92}).  For $\mathbb{H}^2$ we have the simple explicit formula due to McKean \cite{McK70}
\bq
p_t(\mb{x},\mb{y})&=&\frac{\sqrt{2} \,e^{-t/8}}{(2\pi t)^{3/2}}\int_{d(\mb{x},\mb{y})}^{\infty} \frac{re^{-r^2/2t}}{\sqrt{\cosh r -\cosh d(\mb{x},\mb{y})}}dr\, \nn
\eq
(see e.g. Theorem 3.1 in Matsumoto \& Yor \cite{MY05} for the corresponding formula for the $n$-dimensional hyperbolic space).

\sk
\sk
\sk

\section{Small-time asymptotics for call options}


\bs

\subsection{A gauge transformation to remove the $\mc{A}$ term}
\label{subsection:Gauge}

\bs

Here and throughout, we let $p^0_t(\mb{x},\mb{y})$ denotes the heat kernel associated with the usual Laplace-Beltrami operator (i.e. with $\mc{A}=0$).

\sk

The following lemma computes an upper bound for $p_t(\mb{x},\mb{y})$ in terms of the heat kernel $p^0_t(\mb{x},\mb{y})$ for the case when $\mc{A}=0$.  This is needed so we can appeal to the Davies heat kernel estimate that follows.

\sk
\sk

\begin{lem}
We have the following upper bound for $\hat{p}_t(\mb{x},\mb{y})=\sqrt{|g(\mb{y}) |}\,p_t(\mb{x},\mb{y})$:
\label{lem:GaugeBound}
\bq
\label{eq:GaugeBound}
\hat{p}_t(\mb{x},\mb{y}) &\le&
   \frac{\chi(x_0,y_0)}{\chi(x_1,y_1)}\, e^{V_{\mathrm{max}}t} \,\hat{p}^0_t(\mb{x},\mb{y})\,,   \,
\eq
\nind for some constant $V_{\mathrm{max}}<\infty$, where $\mb{x}=(x_0,y_0)$, $\mb{y}=(x_1,y_1)$ and
\bq \chi(x,y) &=& \sqrt{\sigma(x)}\,e^{\half x} \frac{\sqrt{\al(y)}}{\sqrt{y}} \,e^{-\int_1^y \frac{\mu(u)}{\al(u)^2}du}\, \nn
\eq
\nind (see also pages 108-9 in  \cite{HL08} for related discussion).
\end{lem}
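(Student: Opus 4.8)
The plan is to realise $L=\tfrac12\Delta+\mathcal{A}$ as a conjugate, by the scalar function $\chi$, of a Schr\"odinger-type operator $\tfrac12\Delta+V$ with no first-order term, and then to dominate that operator's heat kernel by $p^0_t$ using that the potential $V$ is bounded above. First I would set $h=\log\chi$ and expand $\chi\,(\tfrac12\Delta+V)(\chi^{-1}f)$: matching the first-order part against $\mathcal{A}$ forces $\partial_i h=-g_{ij}\mathcal{A}^j$, and the zeroth-order part then defines $V=\tfrac12\Delta\log\chi-\tfrac12|\nabla\log\chi|^2$ (Laplacian and gradient with respect to $g$). Because the metric \eqref{eq:SVmetric} is diagonal and, for the model \eqref{eq:Model}, $g_{11}\mathcal{A}^1=-\tfrac12-\tfrac{\sigma'(x)}{2\sigma(x)}$ depends only on $x$ while $g_{22}\mathcal{A}^2=\tfrac{\mu(y)}{\alpha(y)^2}-\tfrac{\alpha'(y)}{2\alpha(y)}+\tfrac1{2y}$ depends only on $y$, the one-form $-g_{ij}\mathcal{A}^j\,dx^i$ is closed, hence (since $M$ is simply connected) exact; integrating in $x$ and $y$ separately recovers exactly the stated $\chi(x,y)=\sqrt{\sigma(x)}\,e^{x/2}\,\tfrac{\sqrt{\alpha(y)}}{\sqrt y}\,e^{-\int_1^y\mu(u)/\alpha(u)^2\,du}$ (the lower limit $1$ is immaterial, as $\chi$ enters only through the ratio $\chi(\mathbf{x})/\chi(\mathbf{y})$).

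The second step, which is the crux, is to verify that $V\le V_{\max}<\infty$ on all of $M$. Computing $V$ in coordinates and using $\partial_x\log\chi=\tfrac12+\tfrac{\sigma'(x)}{2\sigma(x)}$, I expect the $x$-dependent part of $V$ to collapse to $-\tfrac{y^2}{8}\big(\sigma(x)^2+\sigma'(x)^2-2\sigma(x)\sigma''(x)\big)$, which is $\le 0$ by the last bullet of Assumption~\ref{Assumption:TechnicalConditions}; what then remains is a smooth function of $y$ alone, and I would bound it above by examining its behaviour as $y\to0$ and $y\to\infty$ with the asymptotics $\alpha(y)\sim A_1 y$, $\mu(y)\sim\mu_0 y$ near $0$ and $\alpha(y)\sim B_1 y^p$, $\mu(y)\sim-\kappa y$ near $\infty$ from Assumption~\ref{Assumption:TechnicalConditions}. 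This is where those growth conditions on $\alpha$ and $\mu$ (and the exclusion of a genuine mean-reverting drift) are consumed: near $y=0$ the singular $\tfrac1y$-terms in $\partial_y\log\chi$ cancel at leading order and leave a finite limit, while near $y=\infty$ the always-nonpositive term $-\tfrac12\alpha(y)^2\big(\tfrac{\alpha'(y)}{2\alpha(y)}-\tfrac1{2y}-\tfrac{\mu(y)}{\alpha(y)^2}\big)^2$ controls---and, when $p<1$, sends to $-\infty$---the positive part of $\tfrac12\Delta\log\chi$.

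Finally, the identity $L=\chi\circ(\tfrac12\Delta+V)\circ\chi^{-1}$ holds at the level of semigroups on the complete manifold $M$, so at the level of heat kernels with respect to Riemannian volume one has $p_t(\mathbf{x},\mathbf{y})=\tfrac{\chi(\mathbf{x})}{\chi(\mathbf{y})}\,k_t(\mathbf{x},\mathbf{y})$, where $k_t$ is the heat kernel of $\tfrac12\Delta+V$. Since $M$ is geodesically complete with curvature bounded below (Remark~\ref{rem:CurBoundBelow}), the $\tfrac12\Delta$-Brownian motion $B$ is non-explosive, so Feynman--Kac gives $k_t(\mathbf{x},\mathbf{y})=p^0_t(\mathbf{x},\mathbf{y})\,\mathbb{E}^{\mathbf{x}\to\mathbf{y}}_{[0,t]}\big[e^{\int_0^t V(B_s)\,ds}\big]\le e^{V_{\max}t}\,p^0_t(\mathbf{x},\mathbf{y})$; equivalently this follows from the parabolic comparison principle for $\partial_t u=(\tfrac12\Delta+V)u$ against $\partial_t u=(\tfrac12\Delta+V_{\max})u$ with common nonnegative initial data. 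Multiplying through by $\sqrt{|g|}(\mathbf{y})$ and using $\hat{p}_t=\sqrt{|g|}\,p_t$ and $\hat{p}^0_t=\sqrt{|g|}\,p^0_t$ then gives \eqref{eq:GaugeBound} with $\mathbf{x}=(x_0,y_0)$, $\mathbf{y}=(x_1,y_1)$. I expect the main obstacle to be the second step: the explicit bound $V\le V_{\max}$ is exactly where the full strength of Assumption~\ref{Assumption:TechnicalConditions} is needed, and the delicate regime is $y\to\infty$ with $p<1$, where one must check that the negative $|\nabla\log\chi|^2$ contribution dominates.
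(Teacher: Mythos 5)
Your proposal follows essentially the same route as the paper: you conjugate $L=\tfrac12\Delta+\mathcal{A}$ by the scalar $\chi$ to produce a Schr\"odinger operator $\tfrac12\Delta+V$ with no drift, verify $V\le V_{\max}$ using the growth hypotheses in Assumption~\ref{Assumption:TechnicalConditions}, and then invoke Feynman--Kac to dominate the resulting kernel by $e^{V_{\max}t}p^0_t$. The only real difference is presentational: you phrase the gauge transformation invariantly (deriving $\chi$ from exactness of the closed one-form $-g_{ij}\mathcal{A}^j\,dx^i$ and writing $V=\tfrac12\Delta\log\chi-\tfrac12|\nabla\log\chi|^2$), whereas the paper posits $\chi$ and verifies the transformed PDE by a direct coordinate computation; the content, in particular the asymptotic bound on the $y$-part of $V$, is the same (though note that the $-\tfrac12\alpha^2 g^2$ term tends to $-\infty$ for $p<1$ only when $\kappa>0$; for $\kappa=0$ it merely stays bounded, which of course still suffices).
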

\nind \begin{proof}
If we set $(x,y)=(x_0,y_0)$, then we know that $\hat{p}_t(\mb{x},\mb{y})$ is a solution to the backward Kolmogorov equation
\bq
\label{eq:OriginalPDE}
\p_t \hat{p} &=& \left(\mc{A}\,+\,\half \Delta \right)\hat{p}  \,,
\eq
subject to $\hat{p}_0(\mb{x},\mb{y})=\delta(\mb{y}-\mb{x})$, where the spatial partial derivatives in $\Delta$ and $\mc{A}$ are with respect to the backward variable $\mb{x}=(x,y)$.  If we now let $\hat{p}_t(\mb{x},\mb{y})=\frac{\chi(x,y)}{\chi(x_1,y_1)} q_t(\mb{x},\mb{y})$, then the PDE transforms to
\bq\label{eq:TransformedPDE}
\p_t q&=&y^2\sigma(x)^2 \left(\frac{\p_x \chi}{\chi}-\frac{1}{2} \right) \p_xq+ \left(\mu(y)+\al(y)^2\frac{\p_y\chi}{\chi} \right) \p_y q
	+\half y^2\sigma(x)^2\p^2_xq+\half\al(y)^2\p^2_yq+V(x,y)\,q\nn\\
&=&\half \Delta  q \,+\, V(x,y) \,q\,
\eq
\nind with $q_0(\mb{x},\mb{y})=\delta(\mb{y}-\mb{x})$, where $V(x,y) =  \frac{(\mc{A}+\half \Delta) \chi}{\chi}(x,y)$.
In the second equality above, we used the fact that
\bq
\frac{\p_x\chi}{\chi}=\frac{1}{2}\bigg(1+\frac{\sigma'(x)}{\sigma(x)}\bigg)\equiv f(x),\quad \frac{\p_y\chi}{\chi}=-\frac{\mu(y)}{\alpha(y)^2}+\half\bigg(\frac{\alpha'(y)}{\alpha(y)}-\frac{1}{y}\bigg)= g(y) \,.\nn
\eq
In fact, one can use the above equations to obtain that
\bq
V(x,y)&=&\half y^2\sigma(x)^2 \left[-f(x)+f(x)^2+f'(x) \right]+\mu(y)g(y)+\half\alpha(y)^2 \left[g(y)^2+g'(y) \right]\nn\\
&=&-\frac{1}{8}y^2 \left[\sigma(x)^2+\sigma'(x)^2-2\sigma(x)\sigma''(x) \right]+\mu(y)g(y)+\half\alpha(y)^2 \left[ g(y)^2+g'(y) \right].\nn
\eq
From the final bullet point in Assumption \ref{Assumption:TechnicalConditions} we know that
\[V(x,y)\lee \overline{V}(y)\,\,\,\,= \,\,\,\, \mu(y)g(y)+\half\alpha(y)^2 \left[g(y)^2+g'(y) \right],\quad \forall (x,y)\in\R\times\R_+\,.\]
\bl{Moreover,  by Assumption \ref{Assumption:TechnicalConditions} we know that $\overline{V}(y)$ is uniformly bounded from above as $y\to0$ or $y\to\infty$.
 Thus, we know that $\overline{V}(y)$ is uniformly bounded from above for all $y\in(0,\infty)$ by some constant which we call $V_{\mathrm{max}}$ (which is independent of $x$),} and the same constant bounds $V(x,y)$ from above for all $(x,y)$.  Now let $\hat{\pr}$ denote the probability measure under which $(X_t,Y_t)$ defined in \eqref{eq:Model} has infinitesimal generator $L=\half \Delta$ (i.e. with $\mc{A}=0$). Then for any bounded Borel function $f$ over $\mathbb{R}\times\mathbb{R}_+$, and $\mb{x}=(x,y)$, $\mb{y}=(x',y')$, we have
\bq
\int f(\mb{y})q_t(\mb{x},\mb{y})d\mb{y}~ \ex_{\mb{x}}^{\hat{\pr}} \left(e^{\int_0^t V(X_s,Y_s)ds}f(X_t,Y_t) \right)
&=&\int f(\mb{y})\, \ex_{\mb{x}}^{\hat{\pr}} \left(e^{\int_0^t V(X_s,Y_s)ds}1_{(X_t,Y_t)\in d\mb{y}}\right)\,,\nn
\eq
where we have used Feynman-Kac formula in the first equality. By the arbitrariness of $f$, we have that
\bq
q_t(\mb{x},\mb{y})d\mb{y}=\ex_{\mb{x}}^{\hat{\pr}} \left( e^{\int_0^t V(X_s,Y_s)ds}1_{(X_t,Y_t)\in d\mb{y}} \right)\le e^{V_{\max} t}\hat{\pr}_{\mb{x}}\left( (X_t,Y_t)\in d\mb{y} \right) =e^{V_{\max} t}\hat{p}_t^0(\mb{x},\mb{y})d\mb{y}.\nn
\eq
It follows that
\bq
\hat{p}_t(\mb{x},\mb{y}) ~ \frac{\chi(x_0,y_0)}{\chi(x_1,y_1)} q_t(\mb{x},\mb{y})
&\le&   \frac{\chi(x_0,y_0)}{\chi(x_1,y_1)}\, e^{V_{\mathrm{max}}t} \hat{p}^0_t(\mb{x},\mb{y})  \nn \,.
\eq
\end{proof}

\sk

\sk
\begin{rem}
The gauge transformation trick here only works when the correlation $\rho=0$, unless we impose a specific functional form for $\mu(y)$, see subsection \ref{subsection:RhoNotZero} for details.  We would expect a similar result to Theorem \ref{thm:SmallTCallsOTM} to hold for $\rho\ne 0$ and general $\mu(y)$, but to prove this would require re-writing the whole of  \cite{Dav88} for the more general case when $\mc{A}\ne 0$ (\cite{Dav88} only deals with the self-adjoint case when $\mc{A}=0$).  Otherwise, we could impose that the volatility is given by $f(y)$ for some bounded function of $y$, and use the Norris-Stroock \cite{NS91} tail estimate for the fundamental solution to the heat equation with a uniformly elliptic coefficients instead of the Davies estimate, but this type of model would not have realistic fat tail behaviour.
\end{rem}

\sk
\sk

\begin{cor}
\label{cor:4.2}
\bl{Using that  $\alpha(y) \sim A_1 y$ as $y \to 0$ and $\alpha(y) \sim B_1 y^p$ as $y \to \infty$, we have a positive constant $\underline{C}>0$ such that:
\begin{enumerate}
\item For all sufficiently small $y>0$,
\bq
\chi(x,y) &\ge& \underline{C}\sqrt{A_1}\,\sqrt{\sigma(x)} \,e^{\half x} ; \nn
\eq
\item For all sufficiently large $y>0$,
\bq
\chi(x,y) &\ge &
\underline{C}\sqrt{B_1}\,\sqrt{\sigma(x)} \, e^{\half x} y^{-\half(1-p)}.\nn \,
\eq
\end{enumerate}
}
\end{cor}
\begin{proof}
\bl{This is an immediate consequence of the formula given for $\chi$ and our
assumptions on the asymptotics of $\mu$ and $\alpha$. Note that we
use our assumptions on the sign of $\mu$ near $0$ and $\infty$ to bound
$-\int_1^y \frac{\mu(u)}{\alpha(u)^2} d u$ from below.}
\end{proof}

\bs

\subsection{The Davies upper bound for the heat kernel}

\bs

By a simple modification of Theorem 16 in Davies \cite{Dav88} (which deals with the heat equation $\p_t u - \p^2_{xx} u=0$ without the $\half$ factor), we have the following:

\sk
\sk

\begin{thm}
\label{thm:Davies}
If $\mc{M}$ is a complete Riemannian manifold of dimension $N$ such that, for some constant $\beta\ge0$,
\bq
\mathrm{Ric}(\mb{x}) &\ge& -(N-1) \beta^2, \nn
\eq
where $\mathrm{Ric}$ denotes the Ricci curvature\footnote{The Ricci curvature is just equal to the Gaussian curvature when the dimension $n=2$, which is also equal to the sectional curvature.}, then there exists a constant $c_{\dl}$ depending on $\dl$ such that
\bq
0 \lee p^0_t(\mb{x},\mb{y}) \lee c_{\delta}\,|B(\mb{x},t^{\half})|^{-\half} |B(\mb{y},t^{\half})|^{-\half} \,e^{-d(\mb{x},\mb{y})^2/(2+\dl)t} \nn
\eq
for $0<t<1$, where $|B(\mb{x},r)|$ denotes the Riemannian volume of the ball $B(\mb{x},r)=\lb \mb{y} \in M: d(\mb{x},\mb{y})<r \rb$
(see also page 198 in  \cite{Chav84} for a similar result).
\end{thm}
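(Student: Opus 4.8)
The plan is to deduce this statement directly from Davies' original Gaussian upper bound (Theorem 16 in \cite{Dav88}), adjusting only for the normalization of the Laplacian. Davies works with the semigroup generated by the (positive) Laplacian $-\Delta$ acting on the heat equation $\partial_t u = \Delta u$ (equivalently $\partial_t u - \partial^2_{xx} u = 0$ in the model Euclidean case), whereas here $p^0_t$ is the kernel of $e^{t\Delta/2}$, i.e. the fundamental solution of $\partial_t u = \tfrac12 \Delta u$. These two are related by a deterministic time change: if $q_t$ denotes Davies' kernel, then $p^0_t(\mathbf{x},\mathbf{y}) = q_{t/2}(\mathbf{x},\mathbf{y})$. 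So the first step is simply to record this identity and observe that all that remains is to substitute $t \mapsto t/2$ in Davies' bound and relabel constants.

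Next I would invoke the hypothesis $\mathrm{Ric}(\mathbf{x}) \ge -(N-1)\beta^2$, which is exactly the curvature lower bound under which Davies' theorem applies: it gives, via the Bishop--Gromov volume comparison inequality, the doubling and reverse-doubling control on $|B(\mathbf{x},r)|$ that feeds into his on-diagonal estimate $p^0_t(\mathbf{x},\mathbf{x}) \lesssim |B(\mathbf{x},t^{1/2})|^{-1}$, and then into the off-diagonal Gaussian factor via the standard Davies--Gaffney / integrated-maximum-principle argument. Completeness of $\mathcal{M}$ is needed here precisely so that the heat semigroup is conservative and these comparison arguments are valid. I would state that Davies' conclusion, after the time change, reads
\bq
p^0_t(\mathbf{x},\mathbf{y}) \,\le\, c_\delta \, |B(\mathbf{x},t^{1/2})|^{-1/2} |B(\mathbf{y},t^{1/2})|^{-1/2} \, e^{-d(\mathbf{x},\mathbf{y})^2/(2+\delta)t} \nn
\eq
for $0 < t < 1$, where the restriction to the unit time interval is harmless (it only affects the constant $c_\delta$, since for $t$ bounded away from $0$ the bound is trivial and for the small-$t$ regime, which is all we use, it is exactly Davies' statement); the extra $\delta > 0$ in the exponent is the usual loss one pays in the Davies method to obtain a clean Gaussian factor with explicit volume prefactors, and it can be taken arbitrarily small at the cost of blowing up $c_\delta$. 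The lower bound $p^0_t \ge 0$ is immediate since $p^0_t$ is a transition density.

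The only genuine subtlety — and the reason the statement is phrased as "a simple modification of Theorem 16 in \cite{Dav88}" rather than a verbatim quotation — is bookkeeping the factor of $2$ consistently through both the exponential term $d^2/(2+\delta)t$ and the implicit geometric constants, since Davies' $\beta$ and our $\beta$ must be matched up after the time rescaling and the dimension-dependent constant $(N-1)$ must be carried through the Ricci bound correctly. I expect this to be routine but it is the step most prone to an off-by-a-constant slip, so I would carry it out explicitly rather than wave at it. Since in our application $N = 2$, Remark \ref{rem:CurBoundBelow} already supplies the required lower Ricci (= Gaussian) curvature bound $\kappa \ge -A_1^2$ with $\beta = A_1$, so the hypothesis of the theorem is met for the manifold $M$ of the model.
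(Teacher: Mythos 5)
Your proposal matches the paper's treatment essentially exactly: the paper does not prove Theorem \ref{thm:Davies} either, but simply cites Theorem 16 of \cite{Dav88} and remarks that a "simple modification" (the $t\mapsto t/2$ rescaling to pass from $\p_t u = \Delta u$ to $\p_t u = \tfrac12\Delta u$) is all that is required. Your proposal spells out that rescaling, the role of the Ricci lower bound via Bishop--Gromov, and why completeness is needed, which is sound and consistent with the paper's intent.
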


\bs
\sk
\sk

We now return to our manifold $M$. Let $\mb{x}=(x_0,y_0)$, $\mb{y}=(x_1,y_1)$ denote two points on $M$ and let $d(x_0,y_0;x_1,y_1)=d(\mb{x},\mb{y})$.  From the assumption that $\kappa(x,y)\le0$
and the G\"{u}nther volume comparison theorem on page 213 in  \cite{Jost09}, we have that
\bq
|B(\mb{x},r)| &\ge & |B^E(\mb{x},r)| ~ \pi r^2 \,,
\eq
where $|B^E(\mb{x},r)|$ denotes the volume of a ball under the standard Euclidean metric.  Thus setting $r=t^{\half}$ we have the following corollary of Theorem \ref{thm:Davies}:

\sk
\sk

\begin{cor}
\bl{Using Remark \ref{rem:CurBoundBelow} and Theorem \ref{thm:Davies}}, we have the following upper bound
\label{cor:DaviesHKestimate}
\bq
 p^0_t(\mb{x},\mb{y}) \lee \frac{c_{\delta}}{\pi t} \,e^{-d(\mb{x},\mb{y})^2/(2+\dl)t}  \,,\nn
\eq
\nind which (combined with \eqref{eq:GaugeBound}) implies that
\bq
\hat{p}_t(\mb{x},\mb{y}) &\le&
   \frac{\chi(x_0,y_0)}{\chi(x_1,y_1)}\, e^{V_{\mathrm{max}}t} \,\sqrt{|g(x_1,y_1)|}  \frac{c_{\delta}}{\pi t} \,e^{-d(\mb{x},\mb{y})^2/(2+\dl)t}   \,\nn .
\eq

\end{cor}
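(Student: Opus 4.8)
The plan is to deduce this directly from Theorem \ref{thm:Davies} and Lemma \ref{lem:GaugeBound}, with the G\"{u}nther volume comparison inequality already recorded just above the statement doing the geometric work. First I would verify the hypotheses of Theorem \ref{thm:Davies} for our manifold $M$: completeness follows from the growth conditions on $\al$ in Assumption \ref{Assumption:TechnicalConditions} (the distance to $y=0$ and to $y=\infty$ under $g_{ij}$ is infinite), and the required Ricci lower bound is obtained by noting that in dimension $N=2$ one has $\mathrm{Ric}=\kappa$, and by \eqref{eq:kappa:GeneralModel} and Remark \ref{rem:CurBoundBelow} the function $\kappa$ is independent of $x$, continuous on $(0,\infty)$, and has the finite limits $-A_1^2$ as $y\to 0$ and $0$ as $y\to\infty$; hence $\kappa$ is bounded below by some constant $-\beta^2$, so $\mathrm{Ric}(\mb{x})\ge -(N-1)\beta^2$. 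Theorem \ref{thm:Davies} then yields, for $0<t<1$,
\[
0\lee p^0_t(\mb{x},\mb{y})\lee c_\delta\,|B(\mb{x},t^{\half})|^{-\half}\,|B(\mb{y},t^{\half})|^{-\half}\,e^{-d(\mb{x},\mb{y})^2/(2+\dl)t}.
\]

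Next I would bound the two volume factors from below. Assumption \ref{Assumption:NegCur} gives $\kappa\le 0$ on all of $M$, and hence (via Hadamard's theorem) an empty cut locus, so the injectivity radius is infinite and the G\"{u}nther volume comparison theorem applies at every radius, giving $|B(\mb{x},r)|\ge |B^E(\mb{x},r)|=\pi r^2$ as recorded above the statement. Taking $r=t^{\half}$ yields $|B(\mb{x},t^{\half})|^{-\half}\le(\pi t)^{-\half}$, and likewise for $\mb{y}$, so the product of the two volume factors is at most $(\pi t)^{-1}$. Substituting this into the Davies bound gives the first displayed inequality, $p^0_t(\mb{x},\mb{y})\le\frac{c_\delta}{\pi t}\,e^{-d(\mb{x},\mb{y})^2/(2+\dl)t}$.

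Finally I would combine with Lemma \ref{lem:GaugeBound}: that lemma gives $\hat p_t(\mb{x},\mb{y})\le\frac{\chi(x_0,y_0)}{\chi(x_1,y_1)}\,e^{V_{\mathrm{max}}t}\,\hat p^0_t(\mb{x},\mb{y})$, and by definition $\hat p^0_t(\mb{x},\mb{y})=\sqrt{|g|}(x_1,y_1)\,p^0_t(\mb{x},\mb{y})$; inserting the bound on $p^0_t$ from the previous step produces exactly the second displayed inequality.

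There is no serious obstacle here, since the content is entirely in the three imported ingredients (the modified Davies estimate, the G\"{u}nther comparison, and the gauge lemma). The only points that need a little care are (i) checking that $\kappa$ is genuinely bounded below, which uses the asymptotics of Remark \ref{rem:CurBoundBelow} together with continuity on the interior of the $y$-range, and (ii) that the G\"{u}nther lower bound for the volume of metric balls may be applied at all radii, which relies on the empty cut locus from Assumption \ref{Assumption:NegCur}; the restriction $0<t<1$ inherited from Theorem \ref{thm:Davies} is harmless since the application is in the regime $t\to 0$.
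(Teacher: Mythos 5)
Your proof is correct and follows the same route the paper takes: the paper's argument (given in the text immediately preceding the corollary) is precisely to feed the Günther volume lower bound $|B(\mb{x},r)|\ge\pi r^2$ (valid because $\kappa\le 0$) into the Davies bound with $r=t^{1/2}$, then compose with Lemma \ref{lem:GaugeBound}. You are in fact a bit more careful than the paper in spelling out the hypotheses of Theorem \ref{thm:Davies} — namely the Ricci lower bound via the finite limits of $\kappa$ in Remark \ref{rem:CurBoundBelow} plus continuity, and the role of the empty cut locus in allowing Günther at all radii — but the underlying argument is identical.
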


\sk

\begin{lem}
\label{lem:Tails}
From a simple asymptotic analysis of a vertical line we have
\begin{alignat*}{4}
&d(\mb{x},\mb{y})&&\asymp d_0(y_1) \quad \quad  &&(y_1 \to 0) \,\,\,\,&&\text{where} \,\,\, d_{0}(y_1) :=  \bl{-}\frac{1}{ A_1}\log y_1 \,, \nn \\
&d(\mb{x},\mb{y}) &&\asymp d_\infty(y_1)     \quad   &&(y_1 \to \infty)\,\,\,&&\text{where} \,\,\, d_{\infty}(y_1) :=  \left\{
\begin{array}{ll}
\frac{1}{ B_1}\log y_1 , &(p=1)\,, \nn \\
\frac{y_1^{1-p}}{B_1 (1-p)},\quad&(p \in (0,1))
        \end{array}\
        \right.
 \,
\end{alignat*}
for $\mb{x}=(x_0,y_0)$ and $x_1$ fixed (where $\bl{\mb{y}}=(x_1,y_1))$.
\end{lem}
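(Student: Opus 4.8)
\emph{Proof strategy.} The section heading already signals the idea: compare $d(\mb x,\mb y)$ with paths that are (essentially) vertical lines. The plan is to sandwich $d(\mb x,\mb y)$ between the length of one explicit connecting curve and the $y$-arclength of an arbitrary one, thereby reducing everything to the scalar integral $\int \dd y/\alpha(y)$, and then to read off its growth from the hypothesised behaviour of $\alpha$ in Assumption~\ref{Assumption:TechnicalConditions}.

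For the upper bound I would use the concatenation of the horizontal segment from $(x_0,y_0)$ to $(x_1,y_0)$ at fixed height $y_0$, followed by the vertical segment from $(x_1,y_0)$ to $(x_1,y_1)$. Under the metric \eqref{eq:SVmetric} the horizontal piece has length $\frac{1}{y_0}\int_{x_0\wedge x_1}^{x_0\vee x_1}\dd x/\sigma(x)\le |x_1-x_0|/(y_0\,\underline{\sigma})=:C$, a constant independent of $y_1$ because $\sigma\ge\underline{\sigma}>0$, while the vertical piece has length exactly $\big|\int_{y_0}^{y_1}\dd y/\alpha(y)\big|$. Hence $d(\mb x,\mb y)\le C+\big|\int_{y_0}^{y_1}\dd y/\alpha(y)\big|$. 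For the lower bound, for any $C^1$ curve $\gamma(s)=(x(s),y(s))$, $s\in[0,1]$, with $\gamma(0)=\mb x$, $\gamma(1)=\mb y$, discarding the non-negative $\dd x$-term in the length functional gives $\mathrm{length}(\gamma)\ge\int_0^1 |\dot y(s)|/\alpha(y(s))\,\dd s\ge\big|\int_0^1 \dot y(s)/\alpha(y(s))\,\dd s\big|=\big|\int_{y_0}^{y_1}\dd y/\alpha(y)\big|$, the last equality by the fundamental theorem of calculus applied to an antiderivative of $1/\alpha$; taking the infimum over $\gamma$ yields $d(\mb x,\mb y)\ge\big|\int_{y_0}^{y_1}\dd y/\alpha(y)\big|$. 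Since $1/\alpha(y)\sim 1/(A_1 y)$ near $0$ and $\sim 1/(B_1 y^p)$ at $\infty$ with $p\le 1$, this integral diverges as $y_1\to 0$ and as $y_1\to\infty$ (this is precisely the completeness statement in Assumption~\ref{Assumption:TechnicalConditions}), so the two bounds combine to $d(\mb x,\mb y)\asymp\big|\int_{y_0}^{y_1}\dd y/\alpha(y)\big|$.

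It then remains to evaluate the integral. As $y_1\to 0$, fix $\delta>0$ with $(1-\e)A_1 y\le\alpha(y)\le(1+\e)A_1 y$ on $(0,\delta)$; then $\int_{y_1}^{\delta}\dd y/\alpha(y)=\frac{1}{A_1}\big(1+O(\e)\big)\log(\delta/y_1)$ while $\int_{\delta}^{y_0}\dd y/\alpha(y)=O(1)$, so the integral is $\asymp \frac{1}{A_1}|\log y_1|\asymp d_0(y_1)$. The case $y_1\to\infty$, $p=1$ is identical with $A_1$ replaced by $B_1$, giving $\asymp\frac{1}{B_1}\log y_1$. For $y_1\to\infty$, $p\in(0,1)$, the same splitting with $\alpha(y)\sim B_1 y^p$ gives $\int^{y_1}\dd y/\alpha(y)\asymp \frac{1}{B_1}\int^{y_1}y^{-p}\,\dd y\asymp \frac{y_1^{1-p}}{B_1(1-p)}=d_\infty(y_1)$.

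There is no serious obstacle here: no geodesic equations are needed because only the order of growth, not the constant, is claimed. The one point requiring care is the bookkeeping — one must check that the horizontal segment and the contribution of the region where the asymptotics of $\alpha$ are not yet tight are genuinely $O(1)$ and hence negligible against the diverging logarithm or power, which is exactly where completeness (infinite distance to the boundary) is used and what lets one pass from two-sided bounds on $1/\alpha$ to two-sided bounds on its integral.
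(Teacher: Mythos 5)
Your proof is correct, and it fills in precisely what the paper's one-line hint (``a simple asymptotic analysis of a vertical line'') leaves to the reader: the paper never supplies an argument. The key observation you make explicit — that the vertical-line comparison gives only the upper bound, and that the lower bound must come from projecting an arbitrary competitor curve onto the $y$-factor of the metric and invoking the fundamental theorem of calculus — is exactly what is needed to justify the two-sided $\asymp$; the remaining bookkeeping (horizontal detour bounded by $|x_1-x_0|/(y_0\,\underline{\sigma})=O(1)$, splitting $\int dy/\alpha$ into a region where the stated asymptotics of $\alpha$ are within $\e$ and an $O(1)$ remainder) is correctly carried out and matches the assumed behaviour $\alpha(y)\sim A_1 y$ near $0$ and $\alpha(y)\sim B_1 y^p$ at $\infty$. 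One cosmetic remark: as stated in the lemma $d_0(y_1)=\tfrac{1}{A_1}\log y_1$ is negative for small $y_1$, so the clean reading is $d\asymp|d_0(y_1)|$, which is what you in fact prove when you write $\tfrac{1}{A_1}|\log y_1|$; since $\asymp$ only compares magnitudes this is immaterial, but it is worth flagging that your sign convention is the sensible one.
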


\sk
\sk

\subsection{Small-time expansion for non at-the-money call options}
\label{subsection:SmallTimeCalls}

\bs

In Theorem \ref{thm:SmallTCallsOTM}, we state the main result in the paper: a small-time expansion for out-of-the-money call options under the general local-stochastic volatility model in \eqref{eq:Model}.  To prove this result, we proceed along similar lines to section 3 in  \cite{Pau10}.  We introduce the following notation:
\bq
\phi(y_1)&=& \frac{1}{2}d(\mb{x},\mb{y})^2 \,,\nn \\
\psi(y_1)&=& y_1^2 \,\mc{P}(\mb{x},\mb{y}) \, u_0(\mb{x},\mb{y}) \sqrt{|g(x_1,y_1)|}\,,\nn
\eq
\nind where $\mc{P}(\mb{x}$, $\mb{y})=e^{A(\mb{x},\mb{y})}$, $\mb{x}=(x_0,y_0)$, $\mb{y}=(x_1,y_1)$.  Then the following theorem characterizes the small-time behaviour of non-at-the-money call options.

\sk
\sk

\begin{thm}
\label{thm:SmallTCallsOTM}
Consider the stochastic volatility model defined in \eqref{eq:Model} and assume the initial stock price $S_0=1$.\footnote{the result is easily adapted in the general case $S_0 \ne 1$ by considering $X_t=\log \frac{S_t}{S_0}$.}  Then we have the following small-time expansion for the price of a call option with strike $K \ne S_0$:
\bq
\mathbb{E}(S_t-K)^{+}\,-\,(S_0-K)^+ &\sim &
 \frac{A^{\mathrm{SV}}(x_1)}{\sqrt{2\pi}}\,e^{-\phi(y_1^*)/t}\,t^{\frac{3}{2}}\quad \quad \quad \quad (t \to 0)\,, \nn
\eq
\nind where $x_1=\log K$,
\bq
A^{\mathrm{SV}}(x_1)&=& \frac{K \sigma(x_1)^2\psi(y_1^*)}{\sqrt{\phi''(y_1^*)}}\,\frac{1}{2\phi(y_1^*)} \nn
\eq and $y_1^*=y_1^*(x_1)$ is the $y$-value where the shortest geodesic from $(x_0,y_0)$ hits the line $\lbrace x=x_1 \rbrace $ under the metric $g_{ij}$ in \eqref{eq:SVmetric}.
\end{thm}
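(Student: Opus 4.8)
The plan is to follow Paulot's formal scheme but control every step rigorously using the tools assembled above. First I would write, using the Tanaka--Meyer / occupation-time formula, the price of the out-of-the-money call as a time integral of the pinned expectation
\[
\Eb(S_t-K)^+ - (S_0-K)^+ = \half \int_0^t \frac{1}{dx}\Eb\!\left(\sigma(X_s)^2 Y_s^2\, \mathbf{1}_{X_s \in dx}\right)\Big|_{x=x_1}\, ds\,,
\]
so that the core object to estimate is $H_s(x_1) := \frac{1}{dx}\Eb(\sigma(X_s)^2 Y_s^2 \mathbf{1}_{X_s\in dx_1})$. Writing this density as an integral over the volatility variable against the Lebesgue density $\hat p_s$, we get $H_s(x_1) = \sigma(x_1)^2 \int_0^\infty y_1^2\, \hat p_s((x_0,y_0),(x_1,y_1))\, dy_1$. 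The Bellaiche expansion (Theorem \ref{thm:Bellaiche}), together with the normal-coordinate computation of $u_0$ and $\sqrt{|g|}$ from subsection \ref{subsection:JinNormalCoords}, gives $\hat p_s \sim \frac{1}{2\pi s}\,\mc P\, u_0\, \sqrt{|g|}(x_1,y_1)\, e^{-\phi(y_1)/s}$ uniformly for $y_1$ in a compact neighbourhood of the minimiser $y_1^*$, and on that compact set Laplace's method applied to $\int y_1^2 \mc P\, u_0\, \sqrt{|g|}\, e^{-\phi(y_1)/s}\, dy_1 = \int \psi(y_1)\, e^{-\phi(y_1)/s}\, dy_1$ yields $\sqrt{2\pi s}\,\psi(y_1^*)\,\phi''(y_1^*)^{-1/2}\, e^{-\phi(y_1^*)/s}(1+o(1))$, using that $\phi'(y_1^*)=0$ (the transversality/perpendicularity condition from \cite{FJ11}) and $\phi''(y_1^*)>0$.

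The main obstacle — and the reason the earlier machinery was set up — is showing that the contribution to $\int_0^\infty y_1^2 \hat p_s\, dy_1$ coming from $y_1$ outside a fixed compact neighbourhood of $y_1^*$ is negligible compared to $e^{-\phi(y_1^*)/s}$, i.e. is $o(s^{k} e^{-\phi(y_1^*)/s})$ for every $k$. Here the Bellaiche asymptotics are useless (they hold only on compacta), so I would bound $\hat p_s$ on the tails by Corollary \ref{cor:DaviesHKestimate}:
\[
\hat p_s((x_0,y_0),(x_1,y_1)) \le \frac{c_\delta}{\pi s}\,\frac{\chi(x_0,y_0)}{\chi(x_1,y_1)}\, e^{V_{\max}s}\, \sqrt{|g|}(x_1,y_1)\, e^{-d(\mb x,\mb y)^2/(2+\delta)s}\,.
\]
Then, using Lemma \ref{lem:Tails} for the growth of $d(\mb x,\mb y)^2 = 2\phi(y_1)$ as $y_1\to 0$ and $y_1\to\infty$, the asymptotics of $\chi$ from Corollary \ref{cor:4.2}, and the fact that $\sqrt{|g|}(x_1,y_1) = 1/(\sigma(x_1) y_1 \alpha(y_1))$ together with $\alpha(y)\sim A_1 y$, $\alpha(y)\sim B_1 y^p$, I would check that $y_1^2\, \chi(x_1,y_1)^{-1}\, \sqrt{|g|}(x_1,y_1)\, e^{-\phi(y_1)/(1+\delta/2)s}$ is integrable in $y_1$ over $(0,\eta]\cup[M,\infty)$ and that, because $\phi(y_1)$ strictly exceeds $\phi(y_1^*)$ there while growing like $(\log y_1)^2$ or $y_1^{2(1-p)}$ at the ends (so the Gaussian factor dominates all polynomial/logarithmic prefactors), the whole tail integral is $O(e^{-(\phi(y_1^*)+\varepsilon)/(1+\delta/2)s})$; choosing $\delta$ small enough that $(\phi(y_1^*)+\varepsilon)/(1+\delta/2) > \phi(y_1^*)$ makes this exponentially smaller than the leading term. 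This is the only place the completeness of $M$, the linear-growth assumptions on $\alpha,\mu$, and the sign condition on $\sigma^2+\sigma'^2-2\sigma\sigma''$ are genuinely used.

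Having established $H_s(x_1) \sim \sigma(x_1)^2\,\sqrt{2\pi s}\,\psi(y_1^*)\,\phi''(y_1^*)^{-1/2}\,\frac{1}{2\pi s}\, e^{-\phi(y_1^*)/s} = \frac{\sigma(x_1)^2 \psi(y_1^*)}{\sqrt{2\pi s}\,\sqrt{\phi''(y_1^*)}}\, e^{-\phi(y_1^*)/s}$, the last step is to integrate in $s$ from $0$ to $t$. Since $e^{-\phi(y_1^*)/s}$ is sharply peaked near $s=t$, a further (elementary) Laplace/Watson-type estimate gives $\int_0^t s^{-1/2} e^{-\phi(y_1^*)/s}\, ds \sim \frac{t^{3/2}}{\phi(y_1^*)}\, e^{-\phi(y_1^*)/t}$ as $t\to 0$; multiplying by the constants and by $\half K$ (with $K = e^{x_1}$, $S_0=1$) and collecting terms produces exactly
\[
\Eb(S_t-K)^+ - (S_0-K)^+ \sim \frac{K\,\sigma(x_1)^2\,\psi(y_1^*)}{\sqrt{2\pi}\,\sqrt{\phi''(y_1^*)}}\,\frac{1}{2\phi(y_1^*)}\, e^{-\phi(y_1^*)/t}\, t^{3/2} = \frac{A^{\mathrm{SV}}(x_1)}{\sqrt{2\pi}}\, e^{-\phi(y_1^*)/t}\, t^{3/2}\,,
\]
which is the claimed expansion. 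One should also remark that to apply Laplace's method one needs $\phi$ to have a unique, non-degenerate interior minimum at $y_1^*$, which follows from Assumption \ref{Assumption:NegCur} (empty cut locus, so $\phi$ is smooth) and the standard fact that the distance-to-a-line function on a Cartan--Hadamard surface is strictly convex along the line.
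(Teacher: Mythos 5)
Your proposal is correct and follows essentially the same route as the paper's proof: Tanaka--Meyer reduction to a time integral of the pinned density, Bellaiche plus Laplace on a compact volatility interval for the leading term, the gauge transformation together with the Davies bound (via Corollaries \ref{cor:4.2} and \ref{cor:DaviesHKestimate} and Lemma \ref{lem:Tails}) for the tail integrals, and the Watson-type estimate $\int_0^t s^{-1/2}e^{-\phi(y_1^*)/s}\,ds \sim \phi(y_1^*)^{-1}t^{3/2}e^{-\phi(y_1^*)/t}$ for the final time integration. The only slip is that your first displayed identity omits the factor $K$ arising from the change of variables $S=e^x$ in the occupation-time formula, but since you reintroduce the factor $\tfrac{1}{2}K$ at the end the final expression is correct.
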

\sk

\begin{rem}
\label{rem:Transversality}
Because the curvature $\kappa\le 0$, from the argument on page 209 in  \cite{doC92}, we know that there is a unique distance minimizing geodesic from $(x_0,y_0)$ to the line $\lb x=x_1\rb$, and we have the \textit{transversality} condition $\textbf{g}(\frac{d\gamma^*}{dt},(0,1))|_{(x_1,y_1^*)}=0$, i.e. the shortest geodesic comes in perpendicular to the vertical line under the metric $g_{ij}$ (see page 14 in  \cite{FJ11} for more details
on this point).  Moreover, because the correlation $\rho=0$, the shortest geodesic is also perpendicular in the usual Euclidean sense.
\end{rem}

\begin{rem}
There are semi-explicit formulae for computing $y_1^*(x)$ and $\phi(y_1^*)$; these are given by two integral equations in equations  (26) and (27) in the prequel article  \cite{FJ11}.  More specifically, we first solve for $y_1^*(x)$ numerically in equation (27) which is just a line search i.e. a one-dimensional root-finding exercise, and we then plug $y_1^*(x)$ into equation (26) to compute the distance of this shortest geodesic to the vertical line $\lb x=x_1 \rb $.
These calculations are quite low level and tedious, so we do not repeat these calculations in this article.
\end{rem}


\sk
\sk

\nind \begin{proof} Applying the generalized change-of-variable formula for semimartingales (Theorem 3.7.1, part (v) in Karatzas \& Shreve  \cite{KS91}) for $f(S)=(S-K)^+$ and using that $S$ is a martingale, we have
\bq
\mathbb{E} \left(f(S_t)\,-\,f(S_0) \right)&=&
\mathbb{E}\left(\int_{-\infty}^\infty \Lambda_t(S)\delta_K(dS)\right) ~ \mathbb{E} \left(\Lambda_t(K) \right)\,,\nn
\eq
where $\Lambda_t(a)$ is the semimartingale local time for $S_t$ at level $a$ and $\delta_K(dS)$ denotes the Dirac measure concentrated at $S=K$ (see also equation (3.6.47) in  \cite{KS91}). On the other hand, for any bounded, continuous function on $\mathbb{R}_+$, $g$, we have
\bq
\ex \left(\int_0^tg(S_u)d\langle S_u\rangle \right)
&=&\ex \left(\int_0^tg(S_u)S_u^2\sigma(X_u)^2Y_u^2du \right)\nn\\
&=&\int_0^t\ex(g(e^{X_u})e^{2X_u}\sigma(X_u)^2\ex(Y_u^2|X_u))du\nn\\
&=&\int_0^t\int_{-\infty}^\infty g(e^x)e^{2x}\sigma(x)^2\ex(Y_u^2|X_u=x)\pr(X_u\in dx)\,du\nn\\
&=&\int_{-\infty}^\infty g(e^x)e^{2x}\sigma(x)^2\int_0^t\ex(Y_u^2|X_u=x)\,\hat{p}_u(x_0,y_0;x)\,du \,dx,\label{eq:onehand}
\eq
where $\pr(X_u\in dx|X_0=x_0,Y_0=y_0)=\hat{p}_u(x_0,y_0;x)dx$. But by Theorem 3.7.1, part (iii) in Karatzas \& Shreve  \cite{KS91}, we have
\bq
\ex\left(\int_0^tg(S_u)d\langle S_u\rangle \right)
&=&2\,\mathbb{E}\left(\int_{0}^\infty g(S)\Lambda_t(S)dS \right) ~ 2\,\mathbb{E}\left(\int_{-\infty}^\infty g(e^x)\Lambda_t(e^x)e^xdx\right)\, .\label{eq:otherhand}
\eq
By the arbitrariness of $g$, comparing \eqref{eq:onehand} and \eqref{eq:otherhand} we see that
\bq
2e^x\ex \left(\Lambda_t(e^x) \right) ~e^{2x}\sigma(x)^2\int_0^t\ex(Y_u^2|X_u=x)\hat{p}_u(x_0,y_0;x)du ,\quad \quad \forall x\in\mathbb{R}\,.\nn
\eq
In particular
\be
\ex(f(S_t)-f(S_0)) ~\ex(\Lambda_t(K))~\ex(\Lambda_t(e^{x_1}))~\half K\sigma(x_1)^2\int_0^t\ex(Y_u^2|X_u=x_1)\hat{p}_u(x_0,y_0;x_1)du\,.\label{eq:TanakaMeyer}
\ee

\nind \bl{Let $1<a<\infty$ with $a>y_1^*$}.  Applying the Bellaiche heat kernel expansion on the compact interval $[\frac{1}{a},a]$, we know that for \bl{ any fixed $\e\in(0,1)$}, there exists a $t^*=t^*(\e)$ such that for all $t<t^*$ we have
\bq
 \ex(Y_t^2|X_t=x_1)\hat{p}_t(x_0,y_0;x_1)
&~& \int_{-\infty}^\infty y_1^2 \hat{p}_t(x_0,y_0;x_1,y_1) dy_1
\le    (1+\e) \, \int_{\frac{1}{a}}^a \frac{ \psi(y_1)}{2\pi t} \,e^{-\phi(y_1)/t}dy_1
\,+\, I_0  \,+\, I_{\infty},  \,\, \nn \\
\label{eq:Integrals}
&=&  (1+\e)\,\frac{\psi(y_1^*)}{\sqrt{2\pi t \,\phi''(y_1^*)}} \, e^{-\phi(y_1^*)/t}\left[1+O(t^{\half})\right]\,+\, I_0\,+\, I_{\infty}\,,
\eq
\nind where $I_0=\int_0^{\frac{1}{a}}y_1^2\hat{p}_t(x_0,y_0;x_1,y_1) dy_1$, $I_{\infty}=\int_a^{\infty}y_1^2\hat{p}_t(x_0,y_0;x_1,y_1) dy_1$, $y_1^*$ is defined in the statement of the theorem and we have used Laplace's method around the \mf{minimizer} at $y_1=y_1^*$ (see Proposition 2.1, page 323 in Stein \& Sharkarchi \cite{SS03}), which we are allowed to do because the distance function $d$, the metric $(g_{ij})$ and $u_0(\mb{x},\mb{y})$ are all smooth away from the cut locus of $x$ (and the cut locus is empty because $\kappa \le 0)$, so $\psi$ and $\phi$ are smooth.  Similarly we obtain the lower bound
\bq
 \ex(Y_t^2|X_t=x_1)\hat{p}_t(x_0,y_0;x_1)
&\ge &  (1-\e)\,\frac{\psi(y_1^*)}{\sqrt{2\pi t \,\phi''(y_1^*)}} \, e^{-\phi(y_1^*)/t}\, \left[1+O(t^{\half}) \right] \nn \, .
\eq

Fix $\delta>0$.  Then by Corollary \ref{cor:4.2} , Corollary \ref{cor:DaviesHKestimate} and \bl{Lemma \ref{lem:Tails}}, we know that for $a=a(\e)$ sufficiently large we have
\bq
I_0 &\le&    \int_0^{\frac{1}{a}}    y_1^2 \cdot \frac{\chi(x_0,y_0)}{\chi(x_1,y_1)}\, e^{V_{\mathrm{max}}t} \,\sqrt{|g(\mb{y})|} \,\, \frac{c_{\delta}}{\pi t} \,e^{-d(\mb{x},\mb{y})^2/(2+\dl)t}  dy_1 \nn \\
&\le & \frac{c_{\delta}}{\pi t} \,e^{V_{\mathrm{max}}t} \int_0^{\frac{1}{a}}    y_1^2  \sqrt{|g(\mb{y})|} \,\,  C_0 \, \,e^{-d_0(y_1)^2/C_0'(2+\dl)t}dy_1\,, \nn
\eq
\sk
for some positive constants $C_0=C_0(x_0,y_0,x_1)$ and $C_0'=C_0'(x_0,y_0,x_1)$, where $\sqrt{|g(\mb{y})|}\equiv\sqrt{|g(x_1,y_1)|}$.  But from the assumptions on $\al(y)$ we also know that $\sqrt{|g(\mb{y})|}\sim  \frac{1}{A_1\sigma(x_1) y_1^2}$ as $y_1 \to 0$.  \bl{Setting} $\varphi_{\e,\dl}=C_0'(2+\dl)(1+\e)$ and \bl{$c_{\e,\dl,t}= \frac{C_0}{A_1 \sigma(x_1)} c_{\dl}e^{V_{\mathrm{max}}t}(1+\e)/\pi$}, then we obtain
\bq
I_0 &\le&    \frac{c_{\e,\dl,t} }{ t}\, \int_0^{\frac{1}{a}}     \,e^{-d_0(y_1)^2/(\varphi_{\e,\dl}t)}dy_1\nn \\
&\le&   \frac{c_{\e,\dl,t} }{ t}   \int_0^{\frac{1}{a}}  \frac{1}{y_1}  \,e^{-[\frac{1}{A_1}\log y_1]^2/(\varphi_{\e,\dl}t)}dy_1 \nn \\
&=&    \frac{c_{\e,\dl,t} }{ t}   \int_{-\infty}^{\log\frac{1}{a}}    \,e^{-\frac{w^2}{A_1^2 \varphi_{\e,\dl}t}}dw \nn \\
&=&    \frac{c_{\e,\dl,t} }{ t}  \nu \sqrt{2\pi} \int_{-\infty}^{\log\frac{1}{a}} \frac{1}{\nu\sqrt{2 \pi}}\,e^{- \frac{w^2}{2\nu^2}}dw  \quad \quad \text{(where } \nu = A_1 \varphi_{\e,\dl}^{\half}\sqrt{t}/\sqrt{2}\,)\nn \\
&=&  \frac{c_{\e,\dl,t} }{ t}  \nu \sqrt{2\pi}\,\,\Phi(z) \lee   \frac{c_{\e,\dl,t}}{ t} \nu\frac{e^{-z^2/2}}{|z|}\,,  \quad \nn
\eq
\nind where $\Phi(z)=\int_{-\infty}^z \frac{1}{\sqrt{2\pi}}e^{-\half x^2}dx$, $z=\frac{\log \frac{1}{a}}{\nu}$, and we have used that $\frac{1}{y_1}\ge  \bl{1}$ \bl{for $0<y_1 <\frac{1}{a}<1$} and the inequality $\Phi(z) \le \frac{1}{|z|\sqrt{2\pi}}e^{-z^2/2}$ for $z<0$ in the last line.

\sk
\sk

  For $I_{\infty}$, \bl{we only consider the case that $p\in(0,1)$; the case of $p=1$ can be treated with an argument similar as the one  for $I_0$}. Again from Corollaries \ref{cor:4.2} and \ref{cor:DaviesHKestimate} and \bl{Lemma \ref{lem:Tails}}, we know that for $a=a(\e)$ sufficiently large we have
\bq
I_{\infty} &\le&    \int_a^{\infty}   y_1^2 \cdot \frac{\chi(x_0,y_0)}{\chi(x_1,y_1)}\, e^{V_{\mathrm{max}}t} \,\sqrt{|g(\mb{y})|} \,\, \frac{c_{\delta}}{\pi t} \,e^{-d(\mb{x},\mb{y})^2/((2+\dl)t)}  dy_1 \nn \\
 &\le & \frac{c_{\delta}}{\pi t}e^{V_{\mathrm{max}}t} \,\int_a^{\infty}    y_1^2 \,\,C_{\infty}  y_1^{\half(1-p)}
\, \sqrt{|g(\mb{y})|} \,\,  \,e^{-d_{\infty}(y_1)^2/(C_\infty'(2+\dl)t)}  dy_1 ,\nn \,
\eq
for some constants $C_{\infty}=C_{\infty}(x_0,y_0,x_1)$ and  $C_{\infty}'=C_{\infty}'(x_0,y_0,x_1)$.  But for $p \in (0,1)$,
$d_{\infty}(y_1) \mf{\sim} \frac{y_1^{1-p}}{ B_1 (1-p)}$ and $\sqrt{|g(\mb{y})|}\sim  \frac{1}{\sigma(x_1) B_1 y_1^{1+p}}$ as $y_1 \to \infty$.  Thus, we have
\bq
I_{\infty}  &\le&   \frac{\bar{c}_{\e,\delta,t}}{ t}\,\int_a^{\infty}y_1^{\frac{3}{2}(1-p)}\,e^{- \frac{y_1^{2(1-p)}}{ B_1^2 (1-p)^2}/\bl{(\bar{\varphi}_{\e,\delta}t})}dy_1 \nn \,
 \eq
where \bl{$\bar{c}_{\e,\dl,t}= \frac{C_{\infty}}{B_1 \sigma(x_1)} c_{\dl}e^{V_{\mathrm{max}}t}(1+\e)/\pi$ and $\bar{\varphi}_{\e,\delta}=C_\infty'(2+\delta)(1+\e)$}.  Making the change of variable $u=y_1^{1-p}$ in $I_{\infty}$ so $dy_1=\frac{1}{1-p} u^{\frac{p}{1-p}}du$, we see that for $a$ sufficiently large, we have
\bq
I_{\infty} &\le & \frac{\bar{c}_{\e,\delta,t}}{t}\int_{a^{1-p}}^{\infty} u^{\frac{3}{2}}\,e^{-\frac{u^2}{ B_1^2 (1-p)^2 \bl{\bar{\varphi}_{\e,\dl}}t }}\,\,\frac{u^{\frac{p}{1-p}} }{1-p }\, du \nn \\
  &=& \frac{\bar{c}_{\e,\delta,t}}{(1-p)t} \, \int_{a^{1-p}}^{\infty}   u^{\frac{3-p}{2(1-p)}}e^{- \frac{u^2}{ B_1^2 (1-p)^2\bl{\bar{\varphi}_{\e,\dl}}t}}\,du \nn \\
  &\le& \frac{\bar{c}_{\e,\delta,t}}{(1-p)t} \nu \sqrt{2\pi } \, \int_{a^{1-p}}^{\infty} \frac{1}{\nu \sqrt{2\pi }} \,  e^{- \frac{u^2}{2\nu^2}}du \nn \\
   &&  (\text{for } a=a(\e) \text{ sufficiently large, where } \nu =B_1 (1-p)\,\bl{\bar{\varphi}_{\e,\dl}^{\half}}\sqrt{t}\,)  \label{eq:matt-added} \\
  &=& \frac{\bar{c}_{\e,\delta,t}}{(1-p)t} \nu \sqrt{2\pi } \,\Phi^c(\frac{a^{1-p}}{\bl{\nu}}) \nn \\
  &\bl{\le}& \frac{\bar{c}_{\e,\delta,t}}{(1-p)t} \nu  \,\frac{e^{-\half z^2}}{z}\,,\nn
\eq
where $\Phi^c=1-\Phi$ and $z=\frac{a^{1-p}}{\nu}=  \frac{a^{1-p}}{ B_1 (1-p)\,\bl{\bar{\varphi}_{\e,\dl}^{\half}}\sqrt{t}}$.  Thus for $a$ sufficiently large, \mf{$z^2$ will exceed $\phi(y_1^*)/t$} so $I_0$ and $I_{\infty}$ are both higher order terms than the \bl{leading term $e^{-\phi(y_1^*)/t}/\sqrt{t}$} in \eqref{eq:Integrals}, thus we can ignore them at leading order.  A similar argument holds for the right tail integral $I_{\infty}$ when $p=1$.
 Thus we conclude that
\bq
\label{eq:InConclusion}
\ex(Y_t^2|X_t=x_1)\hat{p}_t(x_0,y_0;x_1)
&\sim &  \frac{\psi(y_1^*)}{\sqrt{2\pi t \,\phi''(y_1^*)}} \, e^{-\phi(y_1^*)/t}\,, \quad \quad \quad \quad (t \to 0) \,.
\eq

\sk
\sk
\nind We now have to estimate the integral in \eqref{eq:TanakaMeyer}, using \eqref{eq:InConclusion}.  Using the well known asymptotic relation
 \bq
 \label{eq:Awkward}
 \frac{1}{2} \int_0^t  \frac{1}{\sqrt{2\pi s   }}\,e^{-k^2/2s} ds &=&  \frac{1}{k^2\sqrt{2\pi}}\,t^{\frac{3}{2}} \, e^{-\frac{k^2}{2 t}}\,\,\left[1+O(t)\right] \,\quad \quad \quad \quad (t \to 0)\,,  \nn  \,
 \eq
\nind and comparing with \eqref{eq:TanakaMeyer} we see that for all $\e>0$ there exists a $t^*=t^*(\e)$ such that for all $t \le t^*$ we have
\bq
\mathbb{E}(S_t-K)^{+}\,-\,  (S_0-K)^+& \le & \frac{1}{2}K \sigma(x_1)^2 \int_0^t \frac{\psi(y_1^*)}{\sqrt{2\pi s \phi''(y_1^*)}} \,e^{-\phi(y_1^*)/s} (1+\e) ds \nn \\
&=& K\sigma(x_1)^2  \,\frac{\psi(y_1^*)}{\sqrt{\phi''(y_1^*)}}\frac{1}{2}\int_0^t \frac{1}{\sqrt{2\pi s }} \,e^{-\phi(y_1^*)/s}(1+\e) ds \nn \\
&=& K \sigma(x_1)^2 \,\frac{\psi(y_1^*)}{\sqrt{\phi''(y_1^*)}}\, \frac{1 }{2\phi(y_1^*)\sqrt{2\pi}}\,t^{\frac{3}{2}} \, e^{-\frac{\phi(y_1^*)
}{ t}}\,[1+O(t)]\,(1+\e) \nn \\
&\le& \frac{A^{\mathrm{SV}}(x_1)}{\sqrt{2\pi}}\,e^{-\phi(y_1^*)/t}\,t^{\frac{3}{2}}(1+2\e)\,\nn
\eq
(recall that $A^{\mathrm{SV}}(x_1)= \frac{K \sigma(x_1)^2\psi(y_1^*)}{\sqrt{\phi''(y_1^*)}}\,\frac{1}{2\phi(y_1^*)}$).  We proceed similarly for the lower bound.
\nind \end{proof}

\bs

\subsection{Non-zero correlation}
\label{subsection:RhoNotZero}

\bs
If $d\langle W^1, W^2\rangle_t=\rho dt$ for $\rho \ne 0$, we can still make the gauge transformation trick work, if $\mu(y)$ takes a certain functional form in terms of $\al(y)$ and $\sigma(x)$ is constant, as the following proposition demonstrates.  However, we assume that $\rho \le 0$ to ensure that the stock price process $S_t=e^{X_t}$ is a martingale (see e.g. \cite{Jour04}, \cite{LM07} to see examples of where this fails for $\rho >0$).

\sk
\begin{prop}
\label{prop:GaugeBoundRho}
For $\rho \neq 0,\pm1$, we can find a gauge transformation to remove the $\mc{A}$ term if $\sigma(x)$ is constant,  and
\bq
\mu(y) &=& \mf{\frac{\al(y)}{2y} \left[ y\al'(y)-\al(y) \right]}. \nn \,
\eq
Under this condition, \bl{and assuming $\mu$ and $\al$ also satisfy the other conditions in Assumption \ref{Assumption:TechnicalConditions}}, the potential $V(x,y)$ induced by the gauge transformation is bounded from above, so (if $\rho \le 0)$ Theorem \ref{thm:SmallTCallsOTM} still holds subject to minor modifications of the proof, using the following distance estimates for $\rho \ne 0$

\begin{alignat*}{4}
&d(\mb{x},\mb{y})&&\asymp d_0(y_1), \quad && (y_1 \to 0) \,\,\,\,&&\text{where} \,\,\, d_{0}(y_1) :=  \bl{-}\frac{1}{\bar{\rho} A_1}\log y_1,  \nn \\
&d(\mb{x},\mb{y}) &&\asymp d_\infty(y_1),    \quad \quad &&  (y_1 \to \infty)\,\,\,&&\text{where} \,\,\, d_{\infty}(y_1) :=  \left\{
\begin{array}{ll}
\frac{1}{\bar{\rho} B_1}\log y_1,  \quad \quad &(p=1)\,, \nn \\
\frac{y_1^{1-p}}{\bar{\rho}B_1 (1-p)}, \quad \quad&(p \in (0,1))
        \end{array}\
        \right.
 \,
\end{alignat*}
when $\sigma(x) \equiv 1$ and $\bar{\rho}=\sqrt{1-\rho^2}$.
\end{prop}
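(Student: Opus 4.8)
The plan is to follow the proof of Lemma \ref{lem:GaugeBound} verbatim, now allowing the gauge factor $\chi$ to depend on the new drift $\mu$ and on $\al$. First I would record the geometric data of the correlated model. With $\sigma$ constant the diffusion matrix has entries $a^{11}=\sigma^2y^2$, $a^{12}=a^{21}=\rho\sigma y\al(y)$, $a^{22}=\al(y)^2$, so $|a|=\bar\rho^2\sigma^2y^2\al(y)^2$, $\sqrt{|g|}=(\bar\rho\sigma y\al(y))^{-1}$, and a short computation (the divergence terms in $\mc{A}^i=b^i-\tfrac12(\sqrt{|g|})^{-1}\partial_j(\sqrt{|g|}g^{ij})$ collapsing because $\sigma$ is constant and $\sqrt{|g|}\,a^{12}=\rho/\bar\rho$ is constant) gives $\mc{A}^1=-\tfrac12\sigma^2y^2$ and $\mc{A}^2=\mu(y)-\tfrac{\al(y)(y\al'(y)-\al(y))}{2y}$. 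Exactly as in the proof of Lemma \ref{lem:GaugeBound}, writing $\hat p_t=\tfrac{\chi(\mb{x})}{\chi(\mb{y})}\,q_t$ with $\partial_tq=\tfrac12\Delta q+Vq$ removes the $\mc{A}$ term precisely when the two first-order compatibility equations $\partial_j\log\chi=-g_{ij}\mc{A}^i=:-(\mc{A}^\flat)_j$ are solvable, i.e.\ when the $1$-form $\mc{A}^\flat$ is closed; one then reads off $V=\chi^{-1}(\tfrac12\Delta+\mc{A})\chi$.

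Next I would impose closedness. Since $\sigma$ is constant everything is $x$-independent, and from the metric ($g_{11}=(\bar\rho\sigma y)^{-2}$, $g_{12}=-\rho(\bar\rho^2\sigma y\al(y))^{-1}$, $g_{22}=(\bar\rho\al(y))^{-2}$) one checks that $(\mc{A}^\flat)_2$ is already a function of $y$ alone, so $d\mc{A}^\flat=0$ reduces to $\partial_y(\mc{A}^\flat)_1=0$; writing out $(\mc{A}^\flat)_1=-\tfrac1{2\bar\rho^2}-\tfrac{\rho}{\bar\rho^2\sigma y\al(y)}\bigl(\mu(y)-\tfrac{\al(y)(y\al'(y)-\al(y))}{2y}\bigr)$, this holds (for $\rho\neq0$) if and only if $\tfrac1{y\al(y)}\bigl(\mu(y)-\tfrac{\al(y)(y\al'(y)-\al(y))}{2y}\bigr)$ is constant, which is exactly the stated form $\mu(y)=\tfrac{\al(y)}{2y}(y\al'(y)-\al(y))-c\,y\al(y)$. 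Under this form $(\mc{A}^\flat)_1$ is a constant and $(\mc{A}^\flat)_2=\tfrac{y}{\bar\rho^2\al(y)}\bigl(\tfrac{\rho\sigma}{2}-c\bigr)$, so $d\log\chi=-\mc{A}^\flat$ integrates explicitly to $\chi(x,y)=e^{mx}\exp\!\bigl(\tfrac{2c-\rho\sigma}{2\bar\rho^2}\int_1^y\tfrac{u}{\al(u)}\,du\bigr)$ for an explicit constant $m$, and $V(x,y)=V(y)=\chi^{-1}(\tfrac12\Delta+\mc{A})\chi$ is an elementary function of $y$ only.

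I would then show $V(y)$ is bounded from above — the analogue of the $\overline V$ estimate in Lemma \ref{lem:GaugeBound}. Substituting $\al(y)\sim A_1y$, $\al'(y)\sim A_1$ as $y\to0$ and $\al(y)\sim B_1y^p$, $\al'(y)\sim B_1py^{p-1}$ as $y\to\infty$ (together with the asymptotics of the new $\mu$ these induce) into the closed form of $V(y)$ and examining both ends shows $V(y)\le V_{\max}<\infty$. Feynman--Kac then gives $q_t(\mb{x},\mb{y})\,d\mb{y}=\Eb_{\mb{x}}^{\hat\Pb}\!\bigl(e^{\int_0^tV(Y_s)ds}\mathbf{1}_{(X_t,Y_t)\in d\mb{y}}\bigr)\le e^{V_{\max}t}\,\hat p^0_t(\mb{x},\mb{y})\,d\mb{y}$, so the analogue of \eqref{eq:GaugeBound} holds with the new $\chi$, and hence so does Corollary \ref{cor:DaviesHKestimate}: the Gaussian curvature \eqref{eq:kappa:GeneralModel} does not involve $\rho$ (indeed the shear $\hat x=x-\rho\sigma\int^y u/\al(u)\,du$, $\hat y=y$ turns the metric into the form \eqref{eq:SVmetric} with $\bar\rho\sigma$ in place of $\sigma$), so $\kappa\le0$ still holds under Assumption \ref{Assumption:NegCur} and $M$ is still complete.

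Finally I would rerun the proof of Theorem \ref{thm:SmallTCallsOTM} word for word with $\chi$, $d_0$, $d_\infty$ replaced by the above. The distance estimates come from the same vertical-line analysis as Lemma \ref{lem:Tails}: along $\{x=x_1\}$ one has $ds=dy/(\bar\rho\al(y))$ because $g_{22}=(\bar\rho\al(y))^{-2}$, and integrating with the $\al$-asymptotics gives $d_0(y_1)\asymp\tfrac1{\bar\rho A_1}\log y_1$ and the stated $d_\infty$ (the off-diagonal term contributes only at lower order, and $\asymp$ requires only a bounded ratio). The martingale property of $S_t=e^{X_t}$ used in the Tanaka--Meyer step is exactly why $\rho\le0$ is imposed (cf.\ \cite{Jour04}, \cite{LM07}). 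The main obstacle is the boundedness from above of $V$: unlike the uncorrelated case the potential now also feels the cross term $g_{12}$ and the non-standard drift $\mu$, so the asymptotic bookkeeping at $y\to0$ and $y\to\infty$ is more delicate — it is here, rather than in the essentially algebraic closedness step, that the hypotheses $\sigma\equiv\mathrm{const}$, the precise form of $\mu$, and $p\in(0,1]$ all do real work.
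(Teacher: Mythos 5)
Your proposal follows essentially the same route as the paper's Appendix~C, just phrased a bit more geometrically: the paper substitutes $\hat p = h\,q$, matches the $\partial_x q$ and $\partial_y q$ coefficients against the first-order part of $\tfrac12\Delta$ (giving the two equations the paper labels (C-1), (C-2)), and then imposes $\partial_x\partial_y\log h = \partial_y\partial_x\log h$; you instead say $d\log\chi = -\mc{A}^\flat$ must be integrable, i.e.\ $\mc{A}^\flat$ closed. These are the same compatibility condition, and your computations of $g_{ij}$, $\sqrt{|g|}$, $\mc{A}^i$, $(\mc{A}^\flat)_j$ and the conclusion that $\frac{1}{y\al(y)}\bigl(\mu(y)-\tfrac{\al(y)(y\al'(y)-\al(y))}{2y}\bigr)$ must be constant all agree with the paper's. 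The shear $\hat x=x-\rho\sigma\int^y u/\al(u)\,du$ that diagonalizes the metric is a nice observation the paper does not make explicit, and it cleanly justifies why $\kappa$ is unchanged and why the vertical-line distance element is $dy/(\bar\rho\al(y))$.

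Two remarks on scope and completeness. First, the paper proves more than sufficiency: it shows that commutation of mixed partials forces $\sigma'(x)+\sigma''(x)$ and a certain function of $y$ to equal a common constant $b$, and then an ODE comparison argument with the bounds $0<\ul\sigma\le\sigma\le\ol\sigma$ forces $b=0$, hence $\sigma\equiv\mathrm{const}$. Since the proposition is stated as an ``if,'' omitting this is not a gap in your proof, but it is the genuinely non-algebraic part of the paper's argument, and worth knowing. Second, and more importantly for correctness, you defer the boundedness of $V$ to ``asymptotic bookkeeping'' without actually doing it; in the paper this is the single place where the sign of the leading coefficient matters. With $C=\frac{\sigma_0-2\rho c}{2\sigma_0\bar\rho^2}$ and $g(y)=\frac{2c-\rho\sigma_0}{2\bar\rho^2}\frac{y}{\al(y)}$, the paper computes
\begin{equation*}
V(y)=\tfrac12\sigma_0^2y^2C(C-1)+\rho\sigma_0y\al(y)\,Cg(y)+\tfrac12\al(y)^2\bigl(g(y)^2+g'(y)\bigr)+\mu(y)g(y)
\end{equation*}
and checks that at both ends it behaves like $-\frac{y^2}{8\bar\rho^2}\bigl[(2c-\rho\sigma_0)^2+\sigma_0^2\bar\rho^2\bigr]$, whose bracket is manifestly positive for $\rho\ne\pm1$; that sign (and hence $V\le V_{\max}$) is exactly what your sketch takes on faith. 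You should carry this out rather than assert it, since it is where the hypotheses $p\in(0,1]$ and the precise form of $\mu$ are actually used, as you yourself observe.
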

\nind \begin{proof}
See Appendix C.
\end{proof}

\sk
\sk

\begin{rem}
Setting $\al(y)=\nu y$ for $\nu>0$, we find that $\mu(y)=0$ which is consistent with the SABR model (for $\beta=1)$, so the gauge transformation works for this case.
\end{rem}

\sk
\sk

\subsection{Small-time behaviour for the Black-Scholes model}
\label{subsection:BlackScholes}

\bs

Let $C^{\mathrm{BS}}(S_0,K,t,\sigma)$ denote the price of a European call option at time zero under the Black-Scholes model with stock price $S_0$, strike $K=S_0 e^x$, time-to-maturity $t$, and volatility $\sigma$ (with zero interest rates and dividends).

\sk
\sk

\begin{prop}
\label{propn:BSSmallT}
Let
\bq \hat{\sigma}_t=\sqrt{\sigma^2+at} \nn
\eq
\nind for $t>0$, and assume $t \in (0,\frac{\sigma^2}{|a|})$ if $a<0$.  Then $C^{\mathrm{BS}}(S,K,t,\hat{\sigma}_t)$ has the following asymptotic behaviour as $t \to 0$
\bq
\label{eq:CBStimedep}
C^{\mathrm{BS}}(S,K,t,\hat{\sigma}_t)
&=& (S_0-K)^{+}\,+\, \frac{e^{-\frac{x^2}{2\sigma^2 t}}}{\sqrt{2\pi}}  t^{\frac{3}{2}} e^{\frac{1}{2}\frac{ax^2}{\sigma^4}} A_{\mathrm{BS}}(x,\sigma)[1+O(t)]  \quad \quad\quad \quad (x \ne 0),
\eq
\nind where $K=S_0e^x$ and $A_{\mathrm{BS}}(x,\sigma)=\mf{S_0 e^{\half x}\frac{\sigma^3}{x^2}}$.
\end{prop}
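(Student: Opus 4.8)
The plan is to start from the exact Black–Scholes call price formula and plug in the time-dependent volatility $\hat\sigma_t=\sqrt{\sigma^2+at}$, then expand asymptotically as $t\to 0$. Recall that with $S_0=1$ (the general case follows by scaling) and $K=e^x$, one has $C^{\mathrm{BS}}(1,K,t,v)=\Phi(d_1)-K\,\Phi(d_2)$ with $d_{1,2}=\dfrac{-x}{v\sqrt t}\pm\dfrac12 v\sqrt t$. For $x>0$ (the $x<0$ case is symmetric via put-call parity, or handled identically), both $d_1,d_2\to-\infty$ like $-|x|/(v\sqrt t)$, so $C^{\mathrm{BS}}\to 0 = (S_0-K)^+$ and we are in the tail regime where we can use the Mills-ratio expansion $\Phi(z)=\dfrac{e^{-z^2/2}}{|z|\sqrt{2\pi}}\bigl(1-\tfrac{1}{z^2}+O(z^{-4})\bigr)$ for $z\to-\infty$.

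The key computation is to track the exponent $-d_2^2/2$ with $v=\hat\sigma_t$. Writing $v^2=\sigma^2+at=\sigma^2(1+\tfrac{a}{\sigma^2}t)$, I would expand
\[
-\frac{d_2^2}{2}=-\frac12\Bigl(\frac{x}{v\sqrt t}+\frac12 v\sqrt t\Bigr)^2
=-\frac{x^2}{2v^2 t}-\frac{x}{2}-\frac{v^2 t}{8}.
\]
The middle term $-x/2$ is exactly the source of the $e^{-x/2}$ prefactor in \eqref{eq:CBStimedep}; the last term contributes $1+O(t)$. The first term needs the expansion $\dfrac{1}{v^2}=\dfrac{1}{\sigma^2}\bigl(1-\dfrac{a}{\sigma^2}t+O(t^2)\bigr)$, giving $-\dfrac{x^2}{2v^2t}=-\dfrac{x^2}{2\sigma^2 t}+\dfrac{ax^2}{2\sigma^4}+O(t)$, which produces the $e^{-x^2/(2\sigma^2 t)}$ and $e^{\frac12 ax^2/\sigma^4}$ factors. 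Similarly $-d_1^2/2 = -\dfrac{x^2}{2v^2t}+\dfrac{x}{2}-\dfrac{v^2t}{8}$, so the $\Phi(d_1)$ term carries an $e^{+x/2}$ which, combined with the Mills prefactor $1/|d_1|\sim v\sqrt t/x$, is lower order relative to the $K\Phi(d_2)$ term after one checks the cancellation carefully.

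Then I assemble: $C^{\mathrm{BS}}=\Phi(d_1)-K\Phi(d_2)$, substitute the Mills expansions, and collect. The dominant contribution is $-K\Phi(d_2)\sim -K\cdot\dfrac{e^{-d_2^2/2}}{|d_2|\sqrt{2\pi}}$ with $|d_2|\sim x/(\sigma\sqrt t)$, so $K\Phi(d_2)\sim K\dfrac{\sigma\sqrt t}{x\sqrt{2\pi}}e^{-d_2^2/2}$; but there is a sign issue — since $C^{\mathrm{BS}}>0$, the $\Phi(d_1)$ term must actually dominate this naive estimate, and the true leading term comes from the \emph{difference} $\Phi(d_1)-K\Phi(d_2)$. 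The careful bookkeeping here is: $K e^{-d_2^2/2}=K e^{-x^2/(2v^2t)}e^{-x/2}e^{-v^2t/8}$ while $e^{-d_1^2/2}=e^{-x^2/(2v^2t)}e^{x/2}e^{-v^2t/8}$, and since $K=e^x$ we get $Ke^{-d_2^2/2}=e^{-d_1^2/2}\cdot e^{x}e^{-x/2}/e^{x/2}=e^{-d_1^2/2}$ — so the two exponentials are \emph{equal}, and the leading terms cancel, leaving the next order from the $\bigl(1-1/z^2\bigr)$ corrections in the Mills ratio. This cancellation is the main obstacle: one must expand $\Phi(d_1)$ and $\Phi(d_2)$ to the $z^{-2}$ correction, use $1/d_1^2-1/d_2^2=O(t^2)$ versus the $1/|d_i|\sim \sigma\sqrt t/x$ prefactor difference, and show the surviving term is $K\,\dfrac{e^{-x^2/(2\sigma^2 t)}}{\sqrt{2\pi}}e^{-x/2}e^{\frac12 ax^2/\sigma^4}\dfrac{\sigma^3}{x^2}t^{3/2}$. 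Concretely, $\Phi(d_1)-K\Phi(d_2)=\dfrac{e^{-d_1^2/2}}{\sqrt{2\pi}}\Bigl(\dfrac{1}{|d_1|}-\dfrac{1}{|d_2|}\Bigr)+(\text{higher order})$, and $\dfrac{1}{|d_1|}-\dfrac{1}{|d_2|}=\dfrac{|d_2|-|d_1|}{|d_1||d_2|}$ with $|d_2|-|d_1|=v\sqrt t$ (since $d_1-d_2=v\sqrt t>0$ and both are negative) and $|d_1||d_2|\sim x^2/(v^2 t)$, giving $\dfrac{v^3 t^{3/2}}{x^2}\sim\dfrac{\sigma^3 t^{3/2}}{x^2}$. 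Combined with $e^{-d_1^2/2}=e^{-x^2/(2\sigma^2t)}e^{x/2}e^{\frac12 ax^2/\sigma^4}(1+O(t))$ and the factor $K=e^x$... wait, one must be careful whether the $e^{x/2}$ or $e^{-x/2}$ survives; writing everything in terms of $Ke^{-d_2^2/2}=e^{-x^2/(2\sigma^2t)}e^{x/2}e^{\frac12 ax^2/\sigma^4}(1+O(t))$ and noting $\dfrac{1}{|d_2|}-\dfrac{1}{|d_1|}$ with the $K$ attached gives the stated $e^{-x/2}$... — in any case the remaining step is pure algebra once the cancellation structure is identified, and the $O(t^{5/2})$ error follows by carrying one more order in every expansion (Mills ratio, $1/v^2$, and the $e^{-v^2 t/8}$ factor). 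The $x<0$ case is obtained by the same computation with $d_1,d_2\to+\infty$ and $1-\Phi$ in place of $\Phi$, or directly from put-call parity, giving the identical formula.
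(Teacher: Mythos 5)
Your proof is correct and follows the same route the paper cites (Proposition 2.3 in Forde, Jacquier \& Lee \cite{FJL12}): plug the time-dependent volatility into the exact Black--Scholes formula, use the tail asymptotics of $\Phi$, exploit the identity $e^{-d_1^2/2}=Ke^{-d_2^2/2}$ so that the leading Mills-ratio terms cancel, and extract the surviving $\bigl(\tfrac{1}{|d_1|}-\tfrac{1}{|d_2|}\bigr)=\tfrac{v\sqrt t}{|d_1||d_2|}\sim\tfrac{\sigma^3 t^{3/2}}{x^2}$ contribution. The sign you waffled over resolves in your favour: with $K=e^x$ the common exponential prefactor equals $e^{-x^2/(2v^2t)}e^{x/2}e^{-v^2t/8}$, and the stated formula has $Ke^{-x/2}=e^{x/2}$, so the two agree; the $O(t^{5/2})$ error follows since every correction you list (the $e^{-v^2t/8}$ factor, the $1/v^2$ expansion, $v^3$, the next Mills term) is a relative $O(t)$ against the $t^{3/2}$ main term.
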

\begin{proof}
\mf{This is just the same as} Proposition \bl{3.4} in \cite{FJL12}.
\end{proof}

\sk

\begin{rem}
Note that we are not considering a time-dependent Black-Scholes model here per se, but rather a standard Black-Scholes model but with a volatility parameter which depends on $t$.
\end{rem}

\sk
\sk

\section{Small-time behaviour of implied volatility}

\bs

In this section we derive the corresponding asymptotic expansions for implied volatility.

\sk
\sk

\begin{thm}
\label{thm:ImpliedVol}
For the model defined above, let $\hat{\sigma}_t(x_1)$ denote the implied volatility at time zero and maturity $t$ for strike $K=e^{x_1}$, $K \ne S_0$.  Then we have the following asymptotic behaviour for $\hat{\sigma}_t(x_1)$:
\bq \label{eq:ImpliedVolExpansion}
\hat{\sigma}^2_t(x_1) &=& \hat{\sigma}^2(x_1)  +a(x_1) t\, \,+ \,o(t)\,,
\eq
\nind where
\bq \label{eq:sigmaa}
\hat{\sigma}(x_1)\,=\, \frac{|x_1-x_0|}{d(x_0,y_0;x_1,y_1^*(x_1))}\,,\,\quad \quad
a(x_1) \,=\, \frac{2\hat{\sigma}^4(x_1)}{x^2}\log\frac{A^{\mathrm{SV}}(x_1)}{A_{\mathrm{BS}}(x,\hat{\sigma}(x_1))}\,,
\eq
\nind  where $x=\log \frac{K}{S_0}$.
\end{thm}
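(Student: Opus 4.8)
The plan is to determine $\hat\sigma_t(x_1)$ by matching the small-time expansion for call options in Theorem \ref{thm:SmallTCallsOTM} against the Black--Scholes expansion of Proposition \ref{propn:BSSmallT}. The one structural input needed is that $\sigma \mapsto C^{\mathrm{BS}}(S_0,K,t,\sigma)$ is continuous and strictly increasing, so that $\hat\sigma_t(x_1)$ is the unique solution of $C^{\mathrm{BS}}(S_0,K,t,\hat\sigma_t(x_1)) = \mathbb{E}(S_t-K)^{+}$; this lets us validate a candidate expansion by a monotone sandwich argument rather than by inverting the Black--Scholes formula directly. Throughout $K \ne S_0$, i.e. $x=\log(K/S_0)=x_1-x_0 \ne 0$, so the in/out-of-the-money dichotomy is handled uniformly via put--call parity together with the martingale property (both sides of the matching already appear in the normalized form $\mathbb{E}(S_t-K)^{+}-(S_0-K)^{+}$ and $C^{\mathrm{BS}}-(S_0-K)^{+}$).

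First I would pin down the leading term. Theorem \ref{thm:SmallTCallsOTM} gives $\log\big[\mathbb{E}(S_t-K)^{+}-(S_0-K)^{+}\big] = -\phi(y_1^*)/t + o(1/t)$, while Proposition \ref{propn:BSSmallT} with $a=0$ gives $\log\big[C^{\mathrm{BS}}(S_0,K,t,\sigma)-(S_0-K)^{+}\big] = -x^2/(2\sigma^2 t) + o(1/t)$. Equating exponential rates forces $x^2/(2\sigma^2) = \phi(y_1^*) = \half d(x_0,y_0;x_1,y_1^*)^2$, i.e. the candidate $\hat\sigma(x_1) = |x_1-x_0|/d(x_0,y_0;x_1,y_1^*(x_1))$ of \eqref{eq:sigmaa}, consistent with \cite{FJ11}. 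For any $\e>0$, comparing exponential rates shows that for $t$ small $C^{\mathrm{BS}}(S_0,K,t,\hat\sigma(x_1)-\e) \le \mathbb{E}(S_t-K)^{+} \le C^{\mathrm{BS}}(S_0,K,t,\hat\sigma(x_1)+\e)$, so monotonicity of $C^{\mathrm{BS}}$ in its vol argument gives $\hat\sigma_t(x_1) \to \hat\sigma(x_1)$.

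Next, for the $O(t)$ correction I would substitute $\hat\sigma_t(x_1) = \sqrt{\hat\sigma(x_1)^2 + at}$ into Proposition \ref{propn:BSSmallT} with $\sigma=\hat\sigma(x_1)$. Since $\phi(y_1^*) = x^2/(2\hat\sigma(x_1)^2)$, the exponential factors $e^{-x^2/(2\hat\sigma(x_1)^2 t)}$ and $e^{-\phi(y_1^*)/t}$ coincide exactly, so equating the $t^{3/2}$-prefactors from Theorem \ref{thm:SmallTCallsOTM} and \eqref{eq:CBStimedep} gives
\[
\frac{A^{\mathrm{SV}}(x_1)}{\sqrt{2\pi}} \,=\, \frac{A_{\mathrm{BS}}(x,\hat\sigma(x_1))}{\sqrt{2\pi}}\, e^{\,a x^2/(2\hat\sigma(x_1)^4)}, \qquad A_{\mathrm{BS}}(x,\sigma) := \frac{K e^{-x/2}\sigma^3}{x^2},
\]
where $A_{\mathrm{BS}}$ is the Black--Scholes prefactor read off from \eqref{eq:CBStimedep} at $a=0$. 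Solving yields $\log\frac{A^{\mathrm{SV}}(x_1)}{A_{\mathrm{BS}}(x,\hat\sigma(x_1))} = a x^2/(2\hat\sigma(x_1)^4)$, i.e. $a = a(x_1)$ exactly as in \eqref{eq:sigmaa}.

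Finally I would upgrade this formal matching into \eqref{eq:ImpliedVolExpansion}. Set $\sigma_t^{\pm} := \sqrt{\hat\sigma(x_1)^2 + (a(x_1)\pm\e)t}$. Using the explicit two-sided bounds already produced in the proof of Theorem \ref{thm:SmallTCallsOTM} (the $(1\pm\e)$ estimates there) together with Proposition \ref{propn:BSSmallT}, one checks that replacing $a(x_1)$ by $a(x_1)\pm\e$ shifts the Black--Scholes prefactor by a fixed multiplicative factor $e^{\pm\e x^2/(2\hat\sigma(x_1)^4)}$, which eventually dominates every $[1+O(t^{1/2})]$ and $[1+O(t)]$ error; hence $C^{\mathrm{BS}}(S_0,K,t,\sigma_t^{-}) \le \mathbb{E}(S_t-K)^{+} \le C^{\mathrm{BS}}(S_0,K,t,\sigma_t^{+})$ for $t$ small, and monotonicity gives $\sigma_t^{-} \le \hat\sigma_t(x_1) \le \sigma_t^{+}$, i.e. $|\hat\sigma_t^2(x_1) - \hat\sigma(x_1)^2 - a(x_1)t| \le \e t$ eventually. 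I expect this last step to be the only genuine obstacle: one has to track carefully that the multiplicative error terms in the two expansions, once carried through the division by $t$ in the exponent, stay $o(1)$ and do not eat into the $\pm\e$ separation — which holds precisely because those errors sit in the $t^{3/2}$ prefactor, not in the $1/t$ exponent; everything else is bookkeeping between the two expansions.
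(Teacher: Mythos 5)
Your proposal is correct and follows essentially the same strategy as the paper's own proof in Appendix A: match the exponential rate and the $t^{3/2}$ prefactor between Theorem \ref{thm:SmallTCallsOTM} and Proposition \ref{propn:BSSmallT} to identify the candidate $\hat\sigma(x_1)$ and $a(x_1)$, then rigorize by sandwiching $\mathbb{E}(S_t-K)^+$ between $C^{\mathrm{BS}}(S_0,K,t,\sqrt{\hat\sigma^2+(a\pm\delta)t})$ for $t$ small and invoking monotonicity of the Black--Scholes call in its volatility argument. The only cosmetic difference is that you establish $\hat\sigma_t\to\hat\sigma$ as a preliminary step before tackling the $O(t)$ term, whereas the paper obtains the full two-term expansion directly from the sandwich; otherwise the arguments coincide, including the key point that the $e^{\pm\e}$ slack in the prefactor eventually dominates the $1+O(t^{1/2})$ and $1+O(t)$ error factors.
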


\sk
\nind \begin{proof}
See Appendix \ref{section:HestonSmallTImpliedVol}.
\end{proof}

\sk

\begin{rem}
Note that we have not said anything about uniform convergence in $x$ in \eqref{eq:ImpliedVolExpansion}.  Typically for these type of problems, convergence is uniform on compact sets away from zero,
but the convergence is not uniform on intervals of the form $(0,a)$ or $(-a,0)$ for $a>0$ (but we do not need such a result in this article, so we defer the details for future work).
\end{rem}

\begin{rem}
From \eqref{eq:ImpliedVolExpansion} we see that
\bq
\hat{\sigma}_t(x_1) &=& \hat{\sigma}(x_1)  +\half \frac{a(x_1)}{\hat{\sigma}(x_1)} t\, \,+ \,o(t)\,.
\label{eq:ImpliedVolExpansion2}
\eq
\end{rem}
\bs


\section{Comparison against the asymptotic expansion in Lorig, Pagliarani  \&  Pascucci}

\bs
 \cite{LPP15} consider a general local-stochastic volatility model for a log stock price process $X_t$ of the form
\begin{align}
\left.
\begin{aligned}
d X_t
    &=  -\frac{1}{2}\sig^2(t,X_t,Y_t) d t + \sig(t,X_t,Y_t) d W_t , & X_0 &= x \in \Rb , \\
d Y_t
    &=  f(t,X_t,Y_t) d t + \beta(t,X_t,Y_t) d B_t , & Y_0 &= y \in \Rb, \\
d \< W, B \>_t
    &=  \rho(t,X_t,Y_t) \, d t , &
|\rho|
    &< 1 .
    \end{aligned}
    \right\}
    \label{eq:StochVol}
\end{align}
By expanding the coefficients of the infinitesimal generator of $(X,Y)$ in a Taylor series about an arbitrary point $(\bar{x},\bar{y})$, the authors obtain an explicit expansion for the price of a European call option and its associated implied volatility.  Under suitable conditions, the price of a European option $u(t,x,y):= \Eb_{t,x,y}(\varphi(X_T))$ \footnote{
In this section, $\Eb_{t,x,y}( \, \cdot \, )$ is shorthand for $\Eb( \, \cdot \, | X_t = x, Y_t = y)$, for all $(t,x,y)\in[0,T]\times \Rb \times \bl{\Rb}$.
} satisfies the backward Kolmogorov equation
\begin{align}
(\p_t + \Ac(t)) u
	&=	0 , \quad
u(T,x,y)
	=	\varphi(x) , \label{eq:kbe}
\end{align}
\bl{where $\Ac(t)$ is the infinitesimal generator associated with the two-dimensional process $(X,Y)$.}
We now briefly explain how the  \cite{LPP15} methodology works, in the one-dimensional case and for a general local-stochastic volatility model (see also  \cite{LPP14} and  \cite{PP14}).

\bs

\subsection{The one-dimensional case}

\bs

We first consider the one-dimensional heat equation
\bq
\left[ \p_t + a(x) \p^2_x \right] u &=& 0 \nn \,.
\eq
If we now formally expand $a(x)$ around zero: $a(x) = a(0) + a'(0) x + \half a''(0) x^2 + ...$ and set
$u = u_0 + u_1 + u_2 + ... $,  we obtain the following nested system of Cauchy problems:
\begin{align}
\left( \p_t + a(0) \p^2_x \right) u_0
	&=	0 , &
u_0(T,x)
	&=	\varphi(x) , \nn \\
\left( \p_t + a(0) \p^2_x \right)  u_1
	&=	- a'(0) x \p^2_x u_0 , &
u_1(T,x)
	&=	0 , &
\nn \\
\left( \p_t + a(0) \p^2_x \right)  u_2
	&=	- a'(0) x \p^2_x u_1 - \half a''(0) x^2 \p^2_x u_0 , &
u_2(T,x)
	&=	0 , &
\nn
\end{align}
and so on.  In general each of these equations can be solved recursively using Duhamel's principle, and  \cite{LPP15} give an explicit formula for $u_n$.
It is also helpful to consider an artificial parameter $\e \in (0,1]$.  We then set $a^\e(x) = a(0) + \e a'(0) x +  \half \e^2 a''(0) x^2$, $u^\e = u_0 + \e u_1 + \e^2 u_2 + \ldots $, and to obtain the family of equations above, we collect terms of like order in $\e$, and then finally set $\e=1$.
\bs

\subsection{The general case}

\bs

Now consider the generator $\Ac(t)$ associated with a general local-stochastic volatility model of the form in \eqref{eq:StochVol}:
\bq
\Ac(t)
    &=&  a(t,x,y) ( \d_x^2 - \d_x ) + f(t,x,y) \d_y + b(t,x,y) \d_y^2 + c(t,x,y) \d_x \d_y , \nn
\eq
where the functions $a$, $b$, $c$ are defined as
\begin{align}
a(t,x,y)
    &:=  \frac{1}{2}\sig^2(t,x,y) , &
b(t,x,y)
    &:=  \frac{1}{2}\beta^2(t,x,y) , &
c(t,x,y)
    &:=  \rho(t,x,y) \sig(t,x,y) \beta(t,x,y) . \nn
\end{align}
Expanding each function $\{a,b,c,f\}$ as a Taylor series about a fixed point $(\xb,\yb) \in \Rb^2$:
\begin{align}
\chi(t,x,y)
	&:=	\sum_{n=0}^\infty \chi_n(t,x,y) , &
\chi
	&=	\{ a,b,c,f\} ,\nn \\
\chi_n(t,x,y)
	&:=	\sum_{k=0}^n \chi_{n-k,k}(t) \cdot (x-\xb)^{n-k}(y-\yb)^k , &
\chi_{n-k,k}(t)
	&:=	\frac{1}{(n-k)!k!} \d_x^{n-k}\d_y^k \chi(t,\xb,\yb) , \nn
\end{align}
the generator $\Ac(t)$ can now be written formally as
\begin{align}
\Ac(t)
	&=	\sum_{n=0}^\infty \Ac_n(t) , &
\Ac_n(t)
	&=	a_n(t,x,y) ( \d_x^2 - \d_x ) + f_n(t,x,y) \d_y + b_n(t,x,y) \d_y^2 + c_n(t,x,y) \d_x \d_y \, .
	\nn
\end{align}
We now expand the unknown function $u$ as a series $u = \sum_{n=0}^{\infty} u_n$.  Inserting this expansion, as well as the expansion for $\Ac(t)$ into the Kolmogorov backward equation we again obtain a nested sequence of Cauchy problems
\begin{align}
(\d_t + \Ac_0(t)) u_0
	&=	0 , &
u_0(T,x,y)
	&=	\varphi(x) , \label{eq:nested1} \\
(\d_t + \Ac_0(t)) u_n
	&=	-\sum_{k=1}^n \Ac_k(t) u_{n-k}, &
u_n(T,x,y)
	&=	0 , &
n
	&\geq 1.  \label{eq:nested2}
\end{align}
The sequence $(u_n)$ can be solved explicitly, and a general expression for the $n$th term is given in Theorem 2.6 in  \cite{LPP15}.  For European call/put options, the expansion lends itself to an explicit implied volatility expansion (see Section 3 of  \cite{LPP15}), and the number of terms in the price and implied volatility expansion grows like $n!$.  As such, for practical purposes, one can only compute terms up to order $n=4$.  The advantage of the above method is that the $n$th order approximation (for both price and implied volatility) can be applied to any diffusion whose coefficients are $C^n$ in the spatial variables.  However, to prove the accuracy of the pricing approximation more stringent conditions (discussed below) must be enforced.

%
%
%

\bs

\subsection{Asymptotic error estimates}\label{sec:convergence}

\bs

In \cite{LPP15}, the authors prove the following bound for
the error introduced by replacing the exact European option price $u(t,x,y)$ with the $N$-th order approximation $\bar{u}_{N}(t,x,y)= \sum_{i=0}^N u_i(t,x,y)$.

\sk


\sk

\begin{cor}\label{cor:errorprice}
Consider a European option with payoff function $\varphi$.
Fix $\overline{T}>0$ and $(\bar{x},\bar{y}) = (x,y)$.
\bl{Suppose the coefficients $a(t,\cdot,\cdot)$, $b(t,\cdot,\cdot)$, $c(t,\cdot,\cdot)$ and $f(t,\cdot,\cdot)$ and all of their partial derivatives up to order $N+1$ are bounded by a positive constant $M<\infty$, uniformly with respect to $t \in [0,\overline{T}]$ and that
\begin{align}
\frac{1}{M} (\xi^2 + \eta^2) \leq a(t,x,y) \xi^2 + c(t,x,y) \xi \eta + b(t,x,y) \eta^2 \leq M (\xi^2 + \eta^2) , \quad \forall \, t \in [0,\overline{T}], \, x,y,\xi,\eta \in \Rb . \nn
\end{align}
}
Then for any $0<t<T\leq \overline{T},\varepsilon>0$ there exists a constant $C$ such that
 \begin{align} \label{eq:error_estimate_price}
 \left|u(t,x,y)-\bar{u}_{N}(t,x,y)\right|\leq C(T-t)^{\frac{N+1}{2}}\int_{\mathbb{R}^2}
 \Gamma^{M+\varepsilon}(t,x,y;T,x',y') \varphi(x') d x'd y', \qquad 0\leq t<T,\
 (x,y)\in\mathbb{R}^2 ,
\end{align}
where $\Gamma^{M+\varepsilon}(t,x;T,y)$ is the fundamental solution of the two-dimensional heat
operator
\begin{align}\label{eq:heatoperator}
H^{M+\varepsilon}=(M+\varepsilon) ( \partial^2_{x} + \partial^2_{y} ) + \partial_t ,
\end{align}
and the constant $C$ depends only on $M$, $N$, $\overline{T}$ and $\varepsilon$.
\end{cor}
\begin{proof}
\bl{See Corollary 4.6 of \cite{LPP15}.}
\end{proof}

\bs
\sk

For the case of a European call option with log-moneyness $x_1>0$, we can re-write the error bound as
\bq
|u(t,x,y)-\bar{u}_N(t,x,y)| &\le & C t^{\frac{N+1}{2}} \mathbb{E}(e^{\sigma W_t}-e^{x_1})^+ \nn \\
&\sim&  C t^{\frac{N+1}{2}} e^{-\frac{x_1^2}{2\sigma^2 t}} \,\frac{\sigma^3 e^{x_1}}{\sqrt{2\pi} \, x_1^2} t^{\frac{3}{2}} \quad \quad (t \to 0), \nn
 \eq
where $\sigma^2=M+\e$ and $W_t$ is a standard one-dimensional Brownian motion starting at zero.   But the error in our call option expansion is of the order
\bq
e^{-\phi(y_1^*)/t}\,o(t^{\frac{3}{2}})\quad \quad (t \to 0) \, .\nn
\eq
Thus we see that the  \cite{LPP15} result only gives a tighter error bound \bl{than} our asymptotic call option estimate in Theorem \ref{thm:SmallTCallsOTM} if
\bq
N &\ge &  O \left( \frac{1}{t \log \frac{1}{t}} \right) \left[ \phi(y_1^*) -\frac{x_1^2}{2\sigma^2} \right]  \quad \quad  (t \to 0), \nn \,
\eq
which will require computing $O\left( (\frac{1}{t \log \frac{1}{t}})! \right)$ terms in the  \cite{LPP15} expansion (recall that $\frac{1}{t \log \frac{1}{t}} \to \infty$ as $t \to \ja{0}$, but grows slower than $\frac{1}{t}$) (also note that $\phi(y_1^*) -\frac{x_1^2}{2\sigma^2}$ is positive because $\sigma$ is an upper bound of the diffusion matrix via the ellipticity condition in Assumption 4.1i in \cite{LPP15}).

\sk
\sk

\section{Numerical Example: the SABR model}
\label{section:SABR}

\bs

Consider the well known SABR model for $\beta=1$ with unit vol-of-vol:
 \bq
        \left\{
        \begin{array}{ll}
        dX_{t}\,=\,-\frac{1}{2}Y_t^2 dt\,+\,Y_t dW^1_t\,,\\
        dY_{t}\,\,=\,Y_t dW^2_t\,,
        \end{array}\
        \right. \label{eq:34}
        \eq
\nind with initial value $(X_0,Y_0)=(x_0,y_0)\in\mathbb{R}\times\mathbb{R}_+$ and independent standard Brownian motions $W^1$ and $W^2$.  The metric associated with this model is the hyperbolic metric $ds^2=\frac{1}{y^2}(dx^2+dy^2)$ on the upper half plane $\mathbb{H}^2$, and $g_{11}(x,y)=g_{22}(x,y)=\frac{1}{y^2}$ and $g_{12}(x,y)=g_{21}(x,y)=0$, so $\sqrt{|g(x,y)|}=\frac{1}{y^2}$.  For the hyperbolic metric, it is known (see e.g. page 170 in \cite{HL08} and Paulot \cite{Pau10}\bl{)} that
\bq
d(x_0,y_0;x_1,y_1)&=& \cosh^{-1}[1+\frac{(x_1-x_0)^2+(y_1-y_0)^2}{\bl{2y_0 y_1}}] \,,\nn \\
y_1^*&=& \sqrt{(x_1-x_0)^2+y_0^2}\,,\nn \\
d:=d(x_0,y_0;x_1,y_1^*)&=&  \mf{\cosh ^{-1}[\frac{\sqrt{(x_1-x_0)^2+y_0^2}}{y_0}]} \, , \nn \\
E_{yy}(x_0,y_0;x_1,y_1^*)&=&  \phi''(y_1^*) ~ \frac{d(x_0,y_0;x_1,y_1^*)}{y_0 y_1^* \sinh d(x_0,y_0;x_1,y_1^*)}\,. \nn
\eq
\nind We also have
\bq
\Delta &=&\frac{1}{y^2} \left(\pdd{^2}{ x^2}+\pdd{^2}{ y^2}\right)\,,
\quad \mathcal{A}= -\frac{1}{2}y^2 \p_x\,, \quad u_0(x_0,y_0;x_1,y_1)= \left(\frac{\sinh d(x_0,y_0;x_1,y_1)}{d(x_0,y_0;x_1,y_1)}\right)^{-\frac{1}{2}}\,,\nn \\
A(x_0,y_0;x_1,y_1)&=& \int_0^1 \langle \mathcal{A},\dot{\gamma}\rangle \, dt ~ \int_{x_0}^{x_1} \frac{1}{y^2} \cdot -\frac{1}{2}y^2 \, dx ~ - \frac{1}{2}(x_1-x_0)\,.
\nn
\eq

\nind Without loss of generality, we can set $x_0=0$, and we obtain
\bq
A^{\mathrm{SV}}(x_1) &=& \frac{K \psi(y_1^*)}{\sqrt{\phi''(y_1^*)}}\,\frac{1}{2\phi(y_1^*)} ~\frac{Ke^{-\frac{1}{2}(x_1-x_0)} (\frac{\sinh d}{d})^{-\frac{1}{2}}}{\sqrt{d/(y_0 y_1^* \sinh d)}}\,\frac{1}{d^2}\,. \nn
\eq

\begin{figure}
\centering
\includegraphics[width=240pt,height=230pt]{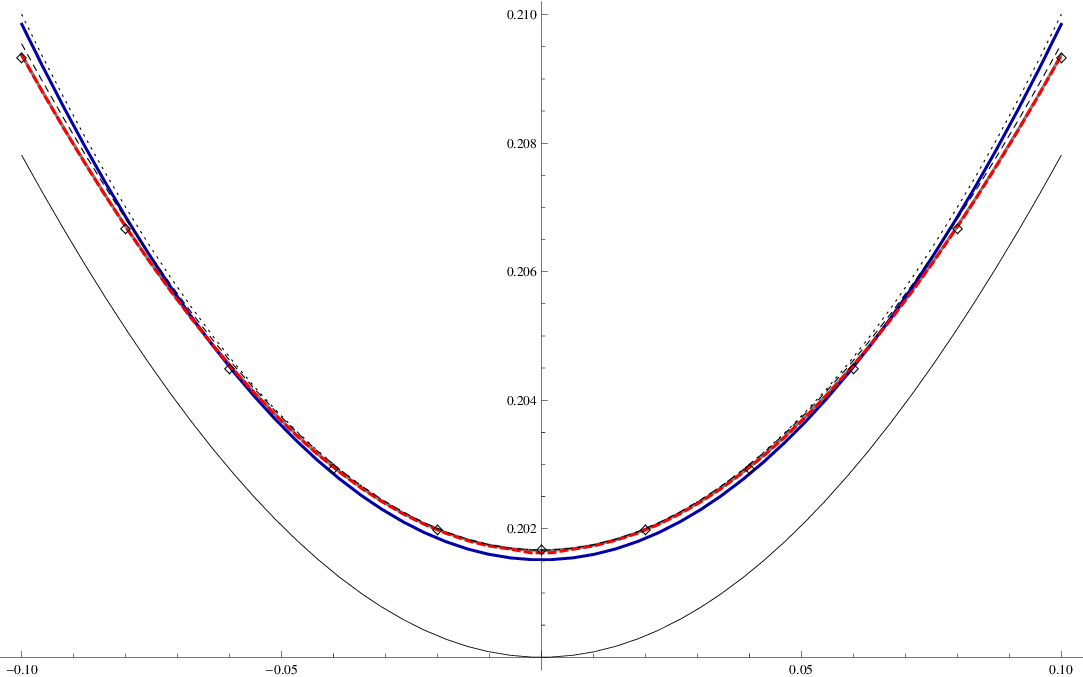}
\includegraphics[width=120pt,height=80pt]{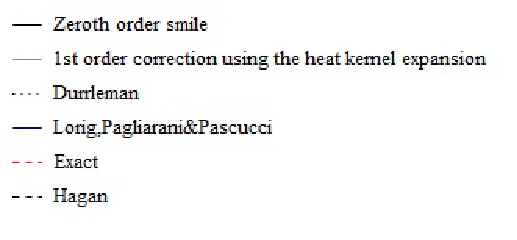} \\
\nind \caption{Here we consider the uncorrelated SABR model $dS_t=S_t Y_t dW^1_t, dY_t= \al Y_t dW^2_t $ with $S_0=1$, $y_0=.2$, $\alpha=1$ and maturity $t=.1$ - we have plotted the leading order smile (the lower curve, solid grey thin line) and the corrected implied volatility smile (light grey) using \eqref{eq:ImpliedVolExpansion}. We have also plotted the exact smile (computed via numerical integration using the formula in  \cite{AS12}) in red-dashed (which is barely distinguishable from our corrected smile in light grey), the Lorig, Pagliarani \& Pascucci truncated expansion with $N=3$ (dark blue), the smile using a fast Monte Carlo scheme (points with diamond-shaped markers, using 6 million simulations and $1000$ time steps), the smile using the well known Hagan et al. \cite{HKLW02} formula (black dashed, which lies above the red-dashed curve and below the LPP smile) and the Durrleman \cite{Dur04} approximation (in black dots) (cf. Theorem 3.1.1 in  \cite{Dur04}, which also agrees with Theorem 4.1 in  \cite{FJ11})(Mathematica code available from MF on request)}
\bs
\includegraphics[width=220pt,height=165pt]{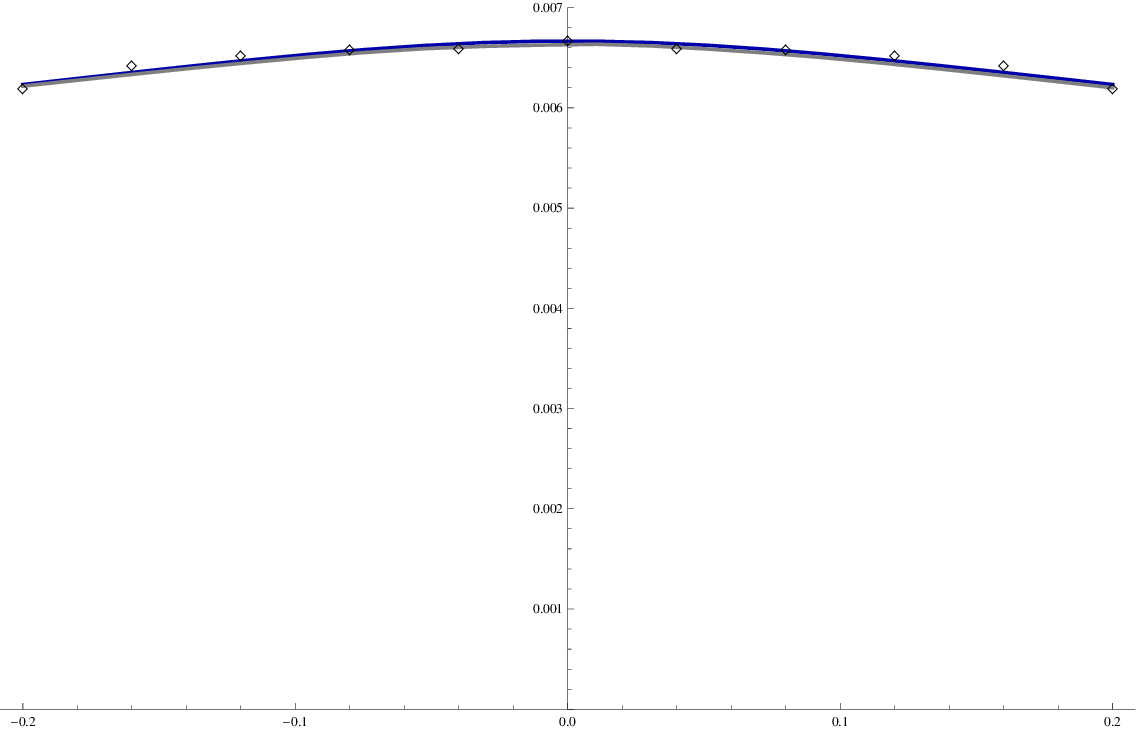}
 \caption{Here we have plotted the our correction term $a(x)$ (blue) against the $a(x)$ implied by \cite{AS12} and the $a(x)$ implied by the Monte Carlo call prices, i.e. $(\sigma_{MC}^2(x)-\sigma_0(x)^2)/t$ (diamond-shaped markers).  Our formula for $a(x)$ is also in exact (i.e. to machine precision) numerical agreement with the expansion given in section 4.3 in Paulot \cite{Pau10}, and in equation (6.10) in Busca et al. \cite{BBF2}.}
\end{figure}

\bs
In the first table below, we show the leading order smile, the first order corrected smile and the implied volatility from Monte Carlo $\sigma_{MC}$, as a function of the log-moneyness.  In the second table, we show $a(x)$ against the $a(x)$ implied by the Monte Carlo implied volatility given by $(\sigma_{MC}^2(x)-\sigma_0(x)^2)/t$ and also the $a(x)$ implied by the formula in \cite{AS12}.

\bs

\begin{center}
  \begin{tabular}{| l | l | l | l | l |}
    \hline
   $x$ & Zeroth order smile & First order correction & Monte Carlo &Relative Error \\ \hline
                       0.0001 & 0.2 & 0.201668 & 0.201664 & -0.00212726\%\\ \hline
                    0.04  & 0.201319 & 0.202961 & 0.20295 & -0.00542944\%\\ \hline
                        0.08 & 0.20511 & 0.206705   & 0.206709 & 0.00220052\%\\ \hline
   0.12 & 0.210961 & 0.212489 &  0.212503 & 0.00685656\%\\ \hline
   0.16 &0.21838 &  0.21983   & 0.219846 & 0.00760195\%\\ \hline
 0.2  &0.226919  & 0.228288 & 0.22828 & -0.00324944\% \\ \hline
  \end{tabular}
\end{center}

\bs

\begin{center}
  \begin{tabular}{| l | l | l | l |l |}
    \hline
   $x$ &  $a(x)$ & $[\sigma_{MC}^2(x)-\sigma_0(x)^2]/t$ & $a(x)$ Antonov \& Spectre & Relative error of $a(x)$ to MC\\ \hline
                       0.0001  & 0.00670034 & 0.00668307 & 0.00662428  & -0.257743 \%\\ \hline
                    0.04   & 0.00664065 & 0.00659592 &0.00659901& -0.673579\%\\ \hline
                        0.08  & 0.00656944   & 0.00658825 & 0.00653502 & 0.286241\%\\ \hline
                       0.12  &0.00646856 & 0.00653048 & 0.00643458& 0.957229\%\\ \hline
                    0.16   & 0.00635304 & 0.00642651 & 0.00632072&  1.15654\%\\ \hline
                        0.20  & 0.00623259   & 0.00619872 & 0.00620046 &-0.54341\%\\ \hline
  \end{tabular}
\end{center}

\bs

\newpage

\section{Small-time asymptotics for stochastic volatility model with a single jump-to-default}

 \bs

 We now extend the model in \eqref{eq:Model} by incorporating a single (independent) Poisson jump-to-default with constant intensity $\lambda>0$, and set $\rho=0$, $\sigma(x)\equiv 1$ for simplicity, so $S_t=e^{X_t}1_{\tau>t}$ where $X$ satisfies \bq
        \left\{
        \begin{array}{ll}
        dX_{t}\,=\, (\lambda-\frac{1}{2}Y_t^2) dt+Y_t
dW^1_t\,,\\
        dY_{t}\,\,=\,\mu(Y_t)dt+  \alpha(Y_t) dW^2_t\,,
        \end{array}\
        \right. \label{eq:JumpModel}
        \eq
  and $\tau \sim \mathrm{Exp}(\lm)$ is the default time for $S$ (which is independent of $W^1,W^2$), with $X_0=0$, $Y_0=y_0\in\mathbb{R}_+$.  Then the price of a out-of-the-money call option at time zero with strike $K>S_0$ is $e^{-\lm t} \mathbb{E}(e^{X_t}-K)^+$.  But $e^{-\lm t} =1 +O(t)$, so the effect of the actual default on the call option will not be seen at the order that we are interested in.  However the effect of the compensator drift term $\lm dt$ \textit{will} be felt at leading order, and will increase $\mc{P}(\mb{x},\mb{y})$ as defined in Section \ref{subsection:SmallTimeCalls} (and thus the call option price at leading order) by the following factor:
\bq
e^{\lm \int_0^1 \frac{1}{y(x(t))^2} \frac{dx}{dt}  \,dt} \,
\label{eq:LambdaExp}
\eq
where $x(t)$ is the $x$ coordinate of the shortest geodesic joining $(0,y_0)$ to the vertical line $\lb x=x_1 \rb$.  By a standard property of geodesics (see e.g. \cite{doC92} and also equation (16) in  \cite{FJ11}), we know that the speed of the geodesic
is conserved, i.e.
\bq
\label{eq:Lagrangian}
L ~ \frac{1}{y^2}\left(\frac{dx}{dt}\right)^2\,+\,\frac{1}{\al(y)^2}\left( \frac{dy}{dt} \right)^2 ~ E
\eq
for some Energy constant $E>0$, where $L$ is the Lagrangian.  But from the second Euler-Lagrange equation, we also know that
\bq
\frac{d}{dt} \left(\pdd{L}{\dot{x}} \right) ~ \frac{d}{dt} \left( \frac{1}{y^2}\frac{dx}{dt} \right) ~ 0\,.
\eq
Thus $K_1=\frac{1}{y^2}\frac{dx}{dt}$ is also a conserved quantity, so we can re-write \eqref{eq:Lagrangian} as
\bq
\label{eq:TurningPoint}
y^2 K_1^2\,+\,\frac{1}{\al(y)^2}\Big(\frac{dy}{dt}\Big)^2 ~ E \,.
\eq
We know from standard properties of geodesics that the shortest distance from $(0,y_0)$ to the line $\lb x=x_1\rb$ is $d(x_1)=\sqrt{E}$ and the shortest geodesic from $(0,y_0)$ to $\lb x=x_1\rb$ is perpendicular to the $y$-axis at $(x_1,y_1^*(x_1))$ so $\frac{dy}{dt}=0$ at this point, hence from \eqref{eq:TurningPoint} we have that $(y_1^*(x_1))^2=E/K_1^2$.  Thus
\bq
K_1&=&\frac{\sqrt{E}}{y_1^*(x_1)} ~ \frac{d(x_1)}{y_1^*(x_1)}\,,
\eq
so \eqref{eq:LambdaExp} simplifies to $e^{\lambda d(x_1)/y_1^*(x_1)}$, and from \eqref{eq:sigmaa} we find that the modified (i.e. jump-adjusted) correction term is given by
\bq
a^J(x) &=& a(x) \,+\, \frac{2 \lm \hat{\sigma}^4(x)}{x^2} \frac{d(x)}{y_1^*(x)} \nn  \,.
\eq
Then, using \eqref{eq:ImpliedVolExpansion2}, we see that the presence of the jump-to-default increases the implied volatility by the amount:
\bq \Delta \hat{\sigma}^{J}(x) \,\,\,\,:=\,\,\,\,  \lm \frac{\hat{\sigma}^3(x)}{x^2} \frac{d(x)}{y_1^*(x)} t \,+\,o(t) &=& \lm \frac{\hat{\sigma}^2(x)}{|x| \, y_1^*(x)} t \,+\,o(t) \nn \,.\eq

For an out-of-the-money put option with strike $K<S_0$, the small-time behaviour is qualitatively completely different, and we see that
\bq
\mathbb{E}(K-S_t)^{+} &=&\, K (1-e^{-\lm t})\,\,+\,\, e^{-\lm t } const. \times\,e^{-\phi(y_1^*)/t}\,t^{\frac{3}{2}}[1+o(1)] \quad \quad \,\, (t \to 0) \nn \\
 &=&\, K \sum_{n=1}^{\infty}\bl{\frac{(-1)^{n+1} (\lm t)^n}{n!}}\,+\, o(t^q) \quad \quad \quad \quad (t \to 0)
\label{eq:Jose2}
\eq
for any $q>0$.
From these observations, we note the following:
\begin{itemize}
\item
For $x>0$, the new correction term $a^J(x)$ tends to infinity as $x \to 0$, which means that $\lim_{x\searrow 0} \pdd{}{t}\hat{\sigma}^2_t(x)|_{t=0}=\infty$; this may not equal $\pdd{}{t} \hat{\sigma}_t(0)$ (which is not computed in this article), but
for a general exponential L\'{e}vy model with a non-zero Brownian component, equation (1.14) in Figueroa-L\'{o}pez et al. \cite{FGH14} show that
\bq
\hat{\sigma}_t(0) &=&\sigma_0 \,+\, const. \times t^{1-\frac{Y}{2}}+ o(t^{1-\frac{Y}{2}})\,,
\label{eq:Jose}
\eq
where $Y \in (0,2)$ measures the degree of jump activity, and \eqref{eq:Jose} implies that $\pdd{}{t} \hat{\sigma}_t(0)|_{t=0}=\infty$.

\item
The $K<S_0$ case is not so mathematically interesting for us; we see that the small-time put price expansion in \eqref{eq:Jose2} is essentially a Taylor series in powers $t$ (none of whose coefficients are affected by the stochastic volatility).  In this case, it can be shown that the implied volatility tends to $\infty$ as $t \to 0$ (see e.g. \cite{FFL12} et al. for details), similar to a pure exponential L\'{e}vy model; we refer the reader to \cite{FGH12} et al. for more results in this direction.
\end{itemize}

\bs

\sk
\sk

\appendix
\renewcommand{\theequation}{A-\arabic{equation}}
\setcounter{equation}{0}  

\section{Proof of Theorem \ref{thm:ImpliedVol}}
\label{section:HestonSmallTImpliedVol}

\sk
\sk

\ja{
The formulae given for $\hat{\sigma}$ and $a$ are chosen such that
$\phi^{\mathrm BS}(x,\hat{\sigma})=\phi(y^*)$ and $\hat{A}^{\mathrm{BS}}(x,\hat{\sigma},a) = A^{SV}(x_1)$, \mf{where $\phi^{\mathrm BS}(x,\sigma):=\frac{x^2}{2\sigma^2}$ and $\hat{A}^{\mathrm{BS}}(x,\sigma,a):=A_{\mathrm{BS}}(x,\sigma) e^{\frac{1}{2}\frac{ax^2}{\sigma^4}} $}.  Given $\delta>0$,
choose $\e < \frac{1}{4} \frac{x^2 \delta}{\sigma^4}$. From Proposition
\ref{propn:BSSmallT} and Theorem \ref{thm:SmallTCallsOTM}, there exists
a $t^*>0$ such that for all $t<t^*$
\bq
C^{\mathrm{BS}}(S_0,K,t,\sqrt{\hat{\sigma}^2 \,+\, t(a+\delta)}) - (S_0 - K)^+
&\geq& \frac{1}{\sqrt{2 \pi}} \hat{A}^{\mathrm{BS}}(x,\hat{\sigma}, a + \delta) e^{-\phi^{\mathrm{BS}}(x,\mf{\hat{\sigma}})/t} t^{\frac{3}{2}} e^{-\e}, \nn \\
\mathbb{E}(S_t-K)^{+}\,-\,(S_0-K)^+
&\leq& \frac{1}{\sqrt{2 \pi}} A^{\mathrm{SV}} e^{-\phi(y^*)/t} t^{\frac{3}{2}} e^\e
= \frac{1}{\sqrt{2\pi}} \hat{A}^{\mathrm{BS}}(x,\hat{\sigma}, a) e^{-\phi^{\mathrm{BS}}(x,\mf{\hat{\sigma}})/t} t^{\frac{3}{2}} e^{\e}. \nn
\eq
We now observe that $\e$ has been chosen so that
\bq
\frac{\hat{A}^{\mathrm{BS}}(x,\hat{\sigma}, a + \delta)}{\hat{A}^{\mathrm{BS}}(x,\hat{\sigma},a)}
~ e^{\frac{1}{2}\frac{x^2 \delta}{\hat{\sigma}^4}} \gt e^{2 \e}.
\eq
So
\[
\mathbb{E}(S_t-K)^{+}\,-\,(S_0-K)^+
< \frac{1}{\sqrt{2\pi}} \hat{A}^{\mathrm{BS}}(x,\hat{\sigma}, a + \delta) e^{-\phi^{\mathrm{BS}}(x,\mf{\hat{\sigma}})/t} t^{\frac{3}{2}} e^{-\e}
\leq
C^{\mathrm{BS}}(S,K,t,\sqrt{\hat{\sigma}^2 \,+\, t(a+\delta)}) - (S_0 - K)^+ .
\]
By the monotonicity of $C^{\mathrm{BS}}$ as a function of volatility, we deduce
\bq
\hat{\sigma}_t^2(x_1) \lee \hat{\sigma}(x_1)^2 + a(x_1) t + \delta t.
\eq
We proceed similarly to prove a lower bound.
}

\sk

\sk

\renewcommand{\theequation}{B-\arabic{equation}}
\setcounter{equation}{0}  

\section{Calculating $A(\mb{x},\mb{y})$ explicitly}
\label{section:CalculatingA}

\sk
\sk
Recall that
\bq
\mc{A} &=&  \left[-\frac{1}{2}y^2-\half  y^2 \sigma'(x)\sigma(x)\right] \p_x\,+\, \left[\mu(y) \,-\, \half (\al'(y) \al(y)  -\frac{\al(y)^2}{y})\right] \p_y \nn \,.
\eq
\nind
\ja{We calculate that}
\bq
A(x_0,y_0;x_1,y_1^*) &=&  \int_0^1 \langle \mathcal{A},\dot{\gamma}\rangle \,dt \nn \\
&=& \ja{\int_0^1} \left[\frac{1}{\sigma(x)^2 y^2}\mathcal{A}^1  \frac{\ja{d \gamma^1}}{dt}  \,+ \, \frac{1}{\alpha(y)^2}\mathcal{A}^2 \frac{\ja{d \gamma^2}}{dt}\right] dt \nn \\
&=& \ja{\int_0^1} \left[\frac{1}{\sigma(x)^2 y^2}(-\frac{1}{2}y^2-\half  y^2 \sigma'(x)\sigma(x))  \frac{\ja{d \gamma^1}}{dt}  \,+ \, \frac{1}{\alpha(y)^2}\mathcal{A}^2 \frac{\ja{d \gamma^2}}{dt}\right] dt \nn \\
&=&\ja{\oint_\gamma} \left( - \frac{1}{2 \sigma(x)^2}\left[1+\sigma'(x)\sigma(x)\right] dx \,+ \frac{1}{\al(y)^2}\left[\mu(y) \,-\, \half (\al'(y) \al(y)  -\frac{\al(y)^2}{y})\right] dy\,\right).\nn
\eq
This is the integral of an exact 1-form so its value does not depend upon the end points of $\gm$. We deduce that
\[
A(x_0,y_0;x_1,y_1^*)
=-\frac{1}{2}\int_{x_0}^{x_1} \frac{1}{\sigma(x)^2}\left[1+\sigma'(x)\sigma(x)\right] dx \,+ \,\int_{y_0}^{y_1^*} \frac{1}{\al(y)^2}\left[\mu(y) \,-\, \half (\al'(y) \al(y)  -\frac{\al(y)^2}{y})\right] dy\,.\nn
\]

\bs

\renewcommand{\theequation}{C-\arabic{equation}}
\setcounter{equation}{0}  

\section{Proof of Proposition \ref{prop:GaugeBoundRho}}

\sk

Denoting $\hat{p}(x_1,y_1,t)=\hat{p}_t(x_0,y_0;x_1,y_1)$, $q_t(x_1,y_1,t)=q_t(x_0,y_0,x_1,y_1)$, and substituting $\hat{p}(x,y,t)=h(x,y)\,q(x,y,t)$ into the original PDE \eqref{eq:OriginalPDE},
we need to find $h(x,y)$ such that the coefficients of $\partial_x q$ and $\partial_y q$ are $\half y^2\sigma'(x)\sigma(x)$ and $\frac{\alpha(y)}{2y}[y\alpha'(y)-\alpha(y)]$ respectively, i.e. agree with $\half \Delta$ in \ja{the equivalent expression to \eqref{eq:LBoperator}
for the case when $\rho\neq0$}.  Performing this substitution, we obtain
\bq
\p_t q&=&-\half y^2\sigma(x)^2 \left(\frac{\p_xh}{h}\, q+\p_x q\right)+\half y^2\sigma(x)^2 \left(\frac{\p^2_xh}{h}q+2\frac{\p_xh}{h}\p_xq+\p^2_xq\right)+\mu(y)
\left(\frac{\p_yh}{h}q+\p_yq\right)\nn\\
&&+\half\al(y)^2 \left(\frac{\p^2_yh}{h}q+2\frac{\p_yh}{h}\p_yq+\p^2_yq \right)+\rho y\sigma(x)\al(y)
\left(\frac{\p_x \p_yh}{h}q+\frac{\p_xh}{h}\p_y q+\frac{\p_y h}{h}\p_xq+\p_x \p_yq\right).\nn
\eq
Collecting coefficients of $q$ and its derivatives, we have
\bq
\p_t q&=&y^2\sigma(x)^2\left(\frac{\p_x h}{h}-\frac{1}{2}+\frac{\rho\al(y)}{\sigma(x)y}\frac{\p_y h}{h} \right) \p_xq+
\left(\mu(y)+\al(y)^2\frac{\p_yh}{h}+\rho y\sigma(x)\al(y)\frac{\p_xh}{h}\right)\p_y q\nn\\
&&+\half y^2\sigma(x)^2\p^2_xq+\half\al(y)^2\p^2_yq+\rho y\sigma(x)\al(y)\p_x \p_y q+V(x,y)q,\nn
\eq
where
\bq
V(x,y) &=&  \frac{(\mc{A}+\half \Delta) h}{h}\,.
\label{eq:Veexy}
\eq
Thus, to determine the function $h$, we impose that
\bq
y^2\sigma(x)^2 \left(\frac{\p_x h}{h}-\frac{1}{2}+\frac{\rho\al(y)}{\sigma(x)y}\frac{\p_y h}{h}\right)&=&\half y^2\sigma'(x)\sigma(x),\label{Aeq1}  \\
\mu(y)+\al(y)^2\frac{\p_yh}{h}+\rho \sigma(x)y\al(y)\frac{\p_xh}{h}&=&\frac{\al(y)}{2y}[y\al'(y)-\al(y)]\,.\label{Aeq2}
\eq
From \eqref{Aeq1} we obtain that $\p_xh/h=\half(1+\frac{\sigma'(x)}{\sigma(x)})-\frac{\rho\al(y)}{\sigma(x)y}\p_yh/h$. Plugging this into \eqref{Aeq2}, we have
\bq
\frac{\p_yh}{h} ~\frac{1}{1-\rho^2} \left[\frac{\al'(y)}{2\al(y)}-\frac{1}{2y}-\frac{\mu(y)}{\al(y)^2}-\frac{\rho y}{2\al(y)}(\sigma(x)+\sigma'(x))\right] ~ g(x,y) \,, \nn
\eq
so \bq
h(x,y)&=& h(x,1) \,\exp \left[ \int_1^y  g(x,u) du \right] \nn \,.
\eq
\nind Now suppose $h(\cdot,\cdot)$ does exist and $h(x_0,y_0)\neq 0$ for some $(x_0,y_0)\in\mathbb{R}\times\mathbb{R}_+$,  then the function $B(x,y):=\log |h(x,y)|$ is a twice continuously differentiable function locally defined in a small neighborhood of $(x_0,y_0)$. It is easily seen that in this neighborhood, we have
\be
\partial_yB(x,y)=g(x,y), \,\,\,\partial_xB(x,y)=\frac{1}{2} \left(1+\frac{\sigma'(x)}{\sigma(x)} \right)-\frac{\rho\alpha(y)}{\sigma(x)y}g(x,y).
\label{bEquations}
\ee
However, for $B(x,y)$ to be locally twice continuously differentiable, necessarily, $\partial_x\partial_y B(x,y)=\partial_y\partial_x B(x,y)$, which leads to 
\bq
-\frac{\rho}{1-\rho^2}\frac{y}{2\al(y)}(\sigma'(x)+\sigma''(x))&=&  \partial_y \left[\frac{1}{2}\left(1+\frac{\sigma'(x)}{\sigma(x)}\right)-\frac{\rho\alpha(y)}{\sigma(x)y}g(x,y)\right] \\
&=&\mf{-\frac{\rho}{1-\rho^2}\frac{1}{\sigma(x)}\p_y\left[\frac{\al(y)}{y}\left(\frac{\al'(y)}{2\al(y)}-\frac{1}{2y}-\frac{\mu(y)}{\al(y)^2}
-\frac{\rho y}{2\al(y)}(\sigma(x)+\sigma'(x)) \right)\right]}\nn \\
&=&-\frac{\rho}{1-\rho^2}\frac{1}{\sigma(x)}\p_y\left[\frac{\al(y)}{y}\left(\frac{\al'(y)}{2\al(y)}-\frac{1}{2y}-\frac{\mu(y)}{\al(y)^2}
 \right)\right]\nn \,.
\eq
When $\rho=0$, the above equality holds automatically. If $\rho\neq 0,\pm1$, then the above equality is equivalent to
\bq \left(\sigma'(x)+\sigma''(x)\right)\sigma(x)=\frac{2\al(y)}{y}\p_y\left[\frac{\al(y)}{y} \left(\frac{\al'(y)}{2\al(y)}-\frac{1}{2y}-\frac{\mu(y)}{\al(y)^2}\right)\right]\, ,\eq
which cannot hold unless  both sides are equal to a constant $b$. Suppose this is the case and $b<0$, then $\frac{b}{\ul{\sigma}}<\sigma'(x)+\sigma''(x)<\frac{b}{\bar{\sigma}}<0$.  Multiplying both sides by $e^x$ we obtain
\bq \frac{b}{\ul{\sigma}}e^x \lt e^x(\sigma'(x)+\sigma''(x))=(e^x\sigma'(x))'\le \frac{b}{\bar{\sigma}}e^x \lt 0\, . \eq
If we now integrate both inequalities from $0$ to $x>0$, we obtain that
\bq \frac{b}{\bl{\ul{\sigma}}}(e^x-1) \lt e^x\sigma'(x)-\sigma'(0) \lt \frac{b}{\bar{\sigma}}(e^x-1) \lt 0 \, . \label{eq:C7} \eq
Solving for $\sigma'(x)$ from the above inequalities, we have
\bq e^{-x}\sigma'(0)+\frac{b}{\ul{\sigma}}(1-e^{-x}) \lt \sigma'(x) \lt e^{-x}\sigma'(0)+\frac{b}{\bar{\sigma}}(1-e^{-x})\,. \eq
Letting $x\to\infty$, we see that $\sigma'(x)<\frac{b}{\bar{\sigma}}<0$, which contradicts the assumption that $\sigma$ is smooth and uniformly bounded. Similarly, we can show that $b$ cannot be strictly positive. Hence, the only possibility is when $\sigma'(x)+\sigma''(x)\equiv0$. In this case, the only positive bounded solution is $\sigma(x)\equiv\sigma_0$ for some positive constant $\sigma_0$.

\bs

In conclusion, \ja{a necessary condition} for $h(x,y)$ to exist is that
$\sigma(x)$ is a constant and
\bq
\label{eq:Gatlin}
\frac{\al(y)}{y}(\frac{\al'(y)}{2\al(y)}-\frac{1}{2y}-\frac{\mu(y)}{\al(y)^2})\equiv c
 \eq
 for some constant $c$. \ja{Conversely, if this holds, then, by the criterion for an exact differential (see e.g. page 61 of Donaldson\cite{Don11}), we can find a $B$ satisfying \eqref{bEquations}. Hence if we take $h:=e^B$
it will satisfy \eqref{Aeq1} and \eqref{Aeq2}. }

Re-arranging \eqref{eq:Gatlin}, we obtain
\bq
\mu(y) &=& \frac{\al(y)}{2y} \left[ y\al'(y)-\al(y) \right]-cy\al(y) . \nn \,
\eq
\bl{But substituting the asymptotic behaviour for $\al(y)$ as $y \to \infty$ given in Assumption \ref{Assumption:TechnicalConditions}, we find that $\mu(y)\to \infty$ unless $c\ge0$ ($y\alpha(y)$ is the dominating term). On the other hand, by applying the mean value theorem to $\alpha'(y)$, we have
$\alpha'(y)=A_1+\alpha''(\zeta_y)y$, where $\zeta_y\in(0,y)$ is some point depending on $y$. Hence, as $y\to0$, we have
\[\mu(y)=\frac{A_1}{2}(1+o(1))\alpha''(\zeta_y)y^2-cA_1y^2(1+o(1)).\]
Because $\alpha''(y)\to0$ as $y\to0$, we know that $-cA_1y^2$ is the leading term of $\mu(y)$ as $y\to0$. In order for $\mu(y)\ge0$ for all sufficiently small $y>0$, we must have $c\le 0$. Overall, the only suitable choice of $c$ is zero.
}
\bs
As a consequence, we have that
\bq
\frac{\p_yh}{h}=g(x,y)\equiv g(y):=-\frac{\rho\sigma_0}{2(1-\rho^2)}\frac{y}{\alpha(y)},\quad\frac{\p_xh}{h}=\half-\frac{\rho\alpha(y)}{\sigma_0 y}\frac{\p_yh}{h}=\frac{1}{2(1-\rho^2)}:=C\,,\nn
\eq
\mf{and hence $\p^2 _x h= C^2 h$ and $\p^2_y h= h g' +g \p_yh$.  Using these relations and \eqref{eq:Veexy} we find that}
\bq
V(x,y)&=&\frac{(\half\sigma_0^2y^2\p_x^2+\rho\sigma_0y\alpha(y)\p_x\p_y+\half\alpha(y)^2\p_y^2-\half\sigma_0^2y^2\p_x+\mu(y)\p_y)h}{h}\nn\\
&=&\half\sigma_0^2y^2C(C-1)+\rho\sigma_0y\alpha(y) C g(y)+\half\alpha(y)^2(g(y)^2+g'(y))+\mu(y)g(y)\nn\\
&=:&V(y) \,.
\eq
\bs
Using that $\alpha(y) \sim A_1 y$ as $y \to 0$ and $\alpha(y) \sim B_1 y^p$ as $y \to \infty$ we find that
\bq
V(y) &\sim& \mf{-\frac{\sigma_0^2 y^2}{8(1-\rho^2)}}  \quad \quad \quad \quad (\mathrm{as} \,\, \,y\to \infty \,\,\mathrm{and \,\,as }\,\, y \to 0)\,, \nn \,
\eq
and $V(x,y)<\infty$ for all $x,y$, so $V$ is bounded from above, as required.


\begin{thebibliography}{99}

\sk

\bibitem[AP07]{AP07} Andersen, L.B.G., V.V. Piterbarg, ``Moment explosions in stochastic volatility Models'', \textit{Finance and Stochastics}, 11(1), 29-50, 2007.

\bibitem[AS12]{AS12} Antonov, A. and M. Spector, ``Advanced analytics for the SABR model'', \url{http://ssrn.com/abstract=2026350}, 2012.

\bibitem[Bel81]{Bel81}
 Bellaiche, C., ``Comportement asymptotique de $p_t(\mb{x},\mb{y})$'', \textit{Ast\'{e}rique}, 84-85, 151-187, 1981.

\bibitem[BA88]{BA88} Ben Arous, G., ``Methods de Laplace et de la phase stationnaire sur le space de Wiener'',
\textit{Stochastics}, 25: 125-153, 1988.

\bibitem[BBF02]{BBF}
 Berestycki, H., J. Busca and I. Florent, ``Asymptotics and calibration of local volatility models'', \textit{Quantitative Finance}, 2, 61-69, 2002.

\bibitem[BS13]{BS13} Brunick, G. and S. Shreve, ``Matching an It\^{o} process by a solution of a stochastic differential equation'', \textit{Annals of Applied Probability}, 23, 1584-1628, 2013.

 \bibitem[BBF04]{BBF2}
 Berestycki, H., J. Busca and I. Florent, ``Computing the implied  volatility in stochastic volatility models'' \textit{Communications on  Pure and Applied Mathematics}, 57(10), 1352-1373, 2004.

\bibitem[Chav84]{Chav84} Chavel, I., ``Eigenvalues in Riemannian Geometry'', Pure and Applied Mathematics, vol. 115, Academic Press Inc., Orlando, FL, 1984.

\bibitem[Dav88]{Dav88} Davies, E.B., ``Gaussian upper bounds for the heat kernels of some second-order operators on Riemannian manifolds'', \textit{Journal of Functional Analysis}, 80,
16-32, 1988.

\bibitem[DFJV11]{DFJV11} Deuschel, J.D., P.K. Friz, A. Jacquier, S. Violante, ``Marginal density expansions for diffusions and stochastic volatility, Part II: Theoretical foundations '',  \textit{Communications on  Pure and Applied Mathematics}, 67(2): 321-350, 2014.

\bibitem[DFJV11b]{DFJV11b} Deuschel, J.D., P.K. Friz, A. Jacquier, S. Violante, ``Marginal density expansions for diffusions and stochastic volatility, Part I: Applications'',  \textit{Communications on  Pure and Applied Mathematics}, 67(1): 40-82, 2014.

\bibitem[dW65]{dW65} DeWitt, B.S. ``Dynamical Theory of Groups and Fields'', Gordon and Breach, 1965.

\bibitem[doC92]{doC92} do Carmo, M., ``Riemannian Geometry'', Birkh\"{a}user, 1992.


\bibitem[Don11]{Don11} Donaldson, S., ``Riemann Surfaces'', Oxford University Press, 2011.


\bibitem[Dur04]{Dur04} Durrleman, V., ``From Implied to Spot Volatilities'', PhD thesis, Princeton University, 2004.

\bibitem[FFL12]{FFL12}  Figueroa-L\'{o}pez, J.E. and M. Forde, ``The small-maturity smile for exponential L\'evy models'', \textit{SIAM Journal on Financial Mathematics}, 3, 33-65, 2012.

\bibitem[FGH14]{FGH14}  Figueroa-L\'{o}pez, J.E., R. Gong and C. Houdr\'{e}, ``High-order short-time expansions for ATM option prices of exponential L\'{e}vy models'', \textit{Mathematical Finance}, 26(3), 516-557, 2016.

\bibitem[FGH12]{FGH12} Figueroa-L\'{o}pez, J.E., R. Gong and C. Houdr\'{e}, ``Small-time expansions of the distributions, densities, and option prices of stochastic volatility models with L\'evy jumps'', \textit{Stochastic Processes and their Applications}, 122, 1808-1839, 2012.

\bibitem[FJ11]{FJ11} Forde, M. and A. Jacquier ``Small-time asymptotics for implied volatility under a general Local-Stochastic volatility model'', \textit{Applied Mathematical Finance}, 18, 517-535, 2011.

\bibitem[FJL12]{FJL12} Forde, M., A. Jacquier and R. Lee, ``The small-time smile and term structure of implied volatility under the Heston model'', \textit{SIAM Journal on Financial Mathematics}, 3, 690-708, 2012.

\bibitem[FdM13]{FdM13} Friz, P. and S. DeMarco,  ``Varadhan's formula, conditioned diffusions, and local volatilities'', \url{http://arxiv.org/pdf/1311.1545.pdf}, 2013.

\bibitem[GHLOW12]{GHLOW12} Gatheral, G., E. Hsu, E.P. Laurence, C. Ouyang and T.-H. Wang, ``Asymptotics of implied volatility in local volatility
models'', (2012), \textit{Mathematical Finance}, 22(4), 591-620, 2012.

\bibitem[Gy\"{o}86]{Gyo86} Gy\"{o}ngy, I., ``Mimicking the one-dimensional marginal distributions
of processes having an It\^{o} differential'', \textit{Probability Theory and Related Fields}, 71(4), 501-516, 1986.

\bibitem[HKLW02]{HKLW02} Hagan, P., D. Kumar, A. S. Lesniewski and D.E. Woodward,
 ``Managing smile risk'', \textit{Wilmott Magazine}, 2002.

\bibitem[HL08]{HL08} Henry-Labord\`{e}re, P., ``Analysis, Geometry, and
Modeling in Finance: Advanced Methods in Option Pricing", Chapman  \&  Hall, 2008.

\bibitem[Hsu]{Hsu} Hsu, E.P., ``A brief introduction to Brownian motion on a Riemannian manifold'', unpublished lecture notes.

\bibitem[Hsu02]{Hsu02} Hsu, E.P., ``Stochastic Analysis on Manifolds'', Graduate Studies in Mathematics, vol. 38,
American Mathematical Society, Providence, RI, 2002.

\bibitem[Jost09]{Jost09} Jost, J., ``Riemannian Geometry and Geometric Analysis'', Fifth edition, Springer, 2008.

\bibitem[Jour04]{Jour04} Jourdain, B, ``Loss of martingality in asset price models with lognormal stochastic volatility'',
\textit{preprint Cermics}, 267, 2004.

\bibitem[KS91]{KS91} Karatzas, I. and S. Shreve, ``Brownian Motion and
 Stochastic Calculus'', Springer-Verlag, 1991.

\bibitem[Laur08]{Laur08} Laurence, P., ``Implied volatility, fundamental solutions,
asymptotic analysis and symmetry methods'', Caltech, April 2008.

\bibitem[Laur10]{Laur10} Laurence, P., ``Asymptotics for local volatility and SABR models'', Global Derivatives, Paris, France, 2010.

\bibitem[Lew07]{Lew07} Lewis, A., ``Geometries and smile asymptotics for a class of Stochastic Volatility
models'', \verb"www.optioncity.net", 2007.

\bibitem[LM07]{LM07} Lions, P.-L. and M. Musiela, ``Correlations and bounds for stochastic volatility models'', \textit{Annales de l'Institut Henri Poincare (C) Non Linear Analysis''}, 24(1), 1-16, 2007.

\bibitem[LPP15]{LPP15} Lorig, M., S. Pagliarani and A. Pascucci, ``Explicit implied volatilities for multifactor local-stochastic volatility models'', to appear in \textit{Mathematical Finance}, 2015.

\bibitem[LPP14]{LPP14} Lorig, M., S. Pagliarani and A. Pascucci, ``Analytic expansions for parabolic equations'', \textit{SIAM Journal on Applied Mathematics}, 75(2),  468-491, 2014.

\bibitem[MA73]{MA73} Misner, C.W. and J. Archibald, ``Gravitation'',  W. H. Freeman and Company, 1973.

\bibitem[McK70]{McK70} McKean, H.P. , ``An upper bound to the spectrum of $\Delta$ on a manifold of
negative curvature'', \textit{Journal of Differential Geometry}, 4, 359-366, 1970 .

\bibitem[MO91]{MO91} McAvity, D.M. and H. Osborn, ``A DeWitt expansion of the heat kernel for manifolds with a
boundary'', \textit{Classical Quantum Gravity}, 8, 603-638, 1991.

\bibitem[MP49]{MP49} Minakshisundaram, S. and A. Pleijel, ``Some properties of the eigenfunctions of the Laplace operator
on Riemannian manifolds'', \textit{Canadian Journal of Mathematics}, 1, 242-256, 1949.

\bibitem[Mol75]{Mol75} Molchanov, S., ``Diffusion processes and Riemannian geometry'', \textit{Russian Mathematics Surveys} 30:1, 1-63, 1975.

\bibitem[MY05]{MY05} Matsumoto, H. and Yor, M., ``Exponential functionals of Brownian motion II: Some related diffusion processes'', \textit{Probability Surveys}, 2, 348-384, 2005.


\bibitem[Neel07]{Neel07} Neel, R., ``The small-time asymptotics of the heat kernel at the cut locus'', \textit{Communications in Analysis and  Geometry} 15(4), 845-890, 2007.

\bibitem[NS91]{NS91} Norris, J. and D.W. Stroock , ``Estimates on the fundamental solution to
heat flows with uniformly elliptic coefficients'', \textit{Proceedings  London Mathematical Society}, 62, 375-402, 1991

\bibitem[Olv74]{Olv74} Olver, F.W., ``Asymptotics and Special Functions'', Academic Press, 1974.

\bibitem[Pau10]{Pau10} Paulot, L., ``Asymptotic implied volatility at the second order
with application to the SABR model'', working paper, 2010.

\bibitem[PP14]{PP14} Pagliarani, S. and A. Pascucci, ``Asymptotic expansions for degenerate parabolic Equations'',
\textit{Comptes Rendus Mathematique}, 352(12), 1011-1016, 2014.

\bibitem[SS03]{SS03} Stein, E.M. and R. Sharkarchi, ``Complex Analysis", Princeton University Press, 2003.


\bibitem[Vass03]{Vass03} Vassilevich, D.V., ``Heat kernel expansion: user's manual'', \textit{Physics Reports} 388, 279-360, hep-th/0306138, 2003.

\end{thebibliography}
\end{document}